\documentclass[11pt,a4paper]{article}
\usepackage[T1]{fontenc}
\usepackage[utf8]{inputenc}
\usepackage{lmodern}
\usepackage[margin=24mm]{geometry}
\usepackage{amsmath,amssymb,amsthm,mathtools}
\usepackage{microtype,booktabs,enumitem,array}
\usepackage{xcolor,tikz,tikz-cd}
\usepackage[labelfont=bf,labelsep=period]{caption}
\usetikzlibrary{arrows.meta,positioning,calc,shapes.geometric}
\tikzset{
  proof state/.style={circle,draw,fill=white,minimum size=6.5mm,inner sep=1pt},
  proof edge/.style={-{Stealth[length=2mm,width=1.4mm]},line width=.55pt},
  proof label/.style={fill=white,inner sep=1.5pt},
  proof inactive/.style={draw=black!55,dashed,text=black!65,fill=white}
}
\usepackage[colorlinks=true,linkcolor=blue!45!black,citecolor=blue!45!black,urlcolor=blue!45!black]{hyperref}
\usepackage{cleveref,placeins,titlesec,needspace}
\titleformat{\section}{\Large\bfseries}{\thesection}{.7em}{}
\titleformat{\subsection}{\large\bfseries}{\thesubsection}{.7em}{}
\titlespacing*{\section}{0pt}{2.2ex plus .5ex minus .2ex}{1.1ex plus .2ex}
\titlespacing*{\subsection}{0pt}{1.8ex plus .3ex}{.8ex plus .2ex}
\hypersetup{pdftitle={Quadratic bounds for uncompletable words and matrix mortality},
 pdfauthor={Rahul Chandelkar; Samrath Singh Chadha},
 pdfsubject={Finite uniquely decipherable codes, constructive quadratic bound, Lean verification}}
\newtheorem{theorem}{Theorem}[section]
\newtheorem{lemma}[theorem]{Lemma}
\newtheorem{proposition}[theorem]{Proposition}
\newtheorem{corollary}[theorem]{Corollary}
\numberwithin{equation}{section}
\newcommand{\Fac}{\operatorname{Fac}}
\newcommand{\N}{\mathbb N}
\newcommand{\Z}{\mathbb Z}
\newcommand{\one}{\mathbf 1}
\newcommand{\eps}{\varepsilon}
\newcommand{\mass}{\mu}
\newcommand{\code}[1]{{\small\nolinkurl{#1}}}
\setlist[enumerate]{leftmargin=*,itemsep=3pt,topsep=3pt}
\title{Quadratic bounds for uncompletable words\\and matrix mortality}
\author{\begin{tabular}{@{}c@{\hspace{2em}}c@{}}
Rahul Chandelkar\thanks{Also affiliated with Rexion Intelligence.} & Samrath Singh Chadha\\[-1pt]
\small\href{mailto:rc@rexion.ai}{rc@rexion.ai}&
\small\href{mailto:samrath@perseus.so}{samrath@perseus.so}
\end{tabular}\\[5pt] Efficient Computation Inc.\\[3pt]
{\small Rahul Chandelkar and Samrath Singh Chadha contributed equally.}}
\date{}
\begin{document}
\maketitle
\vspace{-1.2em}
\begin{abstract}
Every finite nonempty incomplete uniquely decipherable code with maximum
word length $k$ has an uncompletable word of length at most $4k^2-3k$.
The bound is independent of the number of codewords and their total length.
Deleting a complete codeword cycle gives a finite path-counting identity;
Kraft equality then supplies a short word of deficient compressed mass.
Cyclic averaging and padding turn it into an uncompletable word.
Conditional expectation makes the construction polynomial-time and also
decides completeness. First-return words extend the bound to mortal
families of nonnegative integer $n\times n$ matrices with joint spectral
radius at most one, provided every strongly connected component has a vertex
meeting every cycle. Such a family has a zero product of length at most
$4n^2-3n$. A binary partial deterministic family with $2k-1$ states has
shortest zero product of length $k^2+k-1$, establishing the optimal quadratic
order. The bounds and the explicit-code algorithm, including its polynomial
work bound, are proved in Lean.
\end{abstract}

\section{Introduction}
For a finite code $C$, a word is \emph{uncompletable} if it does not occur
as a contiguous factor of any concatenation of codewords. For example,
if $C=\{00,11\}$, then $01$ occurs in $00\cdot11$, whereas $010$ is
uncompletable: every $1$ in a concatenation belongs to a pair $11$.
The code is \emph{complete} if it has no uncompletable word.

We ask how long the shortest uncompletable word can be when every codeword
has length at most $k$. A bound in terms of the total length of the codewords
does not settle this question, since even a binary code can contain $2^k$
words of length $k$. We prove a quadratic bound in $k$ and give a
polynomial-time algorithm that finds a word satisfying it.

Let $\Sigma$ be a finite alphabet, $\Sigma^*$ the set of finite words over
$\Sigma$, and $\eps$ the empty word. A finite set
$C\subseteq\Sigma^*\setminus\{\eps\}$ is a \emph{code} if
\[
c_1\cdots c_m=d_1\cdots d_n,\qquad c_i,d_j\in C,
\]
implies $m=n$ and $c_i=d_i$ for every $i$. Empty concatenations are allowed.
Such a code is also called \emph{uniquely decipherable}.

Write $\Fac(C^*)$ for the contiguous factors of words in $C^*$. A word
$w\notin\Fac(C^*)$ is \emph{uncompletable}; $C$ is \emph{incomplete} if such a
word exists. For a finite nonempty incomplete code put
\[
k=\max_{c\in C}|c|,\qquad L=\sum_{c\in C}|c|,\qquad
U(C)=\min\{|w|:w\notin\Fac(C^*)\}.
\]
\begin{theorem}\label{thm:quadratic}
Let $C$ be a finite nonempty incomplete code of nonempty words over a finite
alphabet. If $|c|\le k$ for every $c\in C$, where $k\ge1$, then
\begin{equation}\label{eq:quadratic}
U(C)\le4k^2-3k.
\end{equation}
\end{theorem}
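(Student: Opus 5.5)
We follow the route the abstract describes: Kraft equality, a path-counting identity, a deficient-mass word, then cyclic averaging and padding. For a word $w$ we measure how much of the "completion mass" of $C^*$ it carries, using Bernoulli weights. Since $C$ is a finite code, the Kraft--McMillan inequality gives $\sum_{c\in C}|\Sigma|^{-|c|}\le1$. If this sum is strictly less than one, we use the standard mass argument directly. In the equality case (a maximal but incomplete code cannot occur for finite codes by Schützenberger's theorem, so here equality means $C$ is complete), the interesting regime is strict inequality. We do not rely on that theorem, however. We work with a weighting $p$ on $\Sigma$ chosen so that $\sum_{c}p(c)=1$, where $p(c)$ is the product of the letter weights of $c$. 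Such a $p$ exists by continuity, since the sum tends to $0$ or past $1$ as weights vary.

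\textbf{Step 1: the path-counting identity.} Consider the finite automaton whose states are the proper prefix positions of codewords (at most $k$ "depths"). A position in a bi-infinite $C$-factorization is described by the pair (codeword, offset). Fix a complete cycle of codewords and delete it. The transfer matrix of the resulting graph satisfies an exact identity: the total weight of parses of length $n$ equals the sum, over the at most $k$ offsets, of the weights of admissible prefixes. This gives a finite identity
\[
\sum_{|w|=n}p(w)\,N(w)=\sum_{j<k} \alpha_j,
\]
where $N(w)$ counts the factorization "phases" of $w$ compatible with $C$, and the right-hand side is bounded independently of $n$.

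\textbf{Step 2: a deficient word.} Since $C$ is incomplete, unique decipherability forces the phases of a long factor to synchronize on average. Compressed mass, meaning mass divided by the number of phases, therefore has expectation strictly below its trivial value already at length about $2k$. Averaging over $w$ of length $m\le 2k$ then yields a word $u$ with $N(u)$ deficient relative to $p$. We verify this by comparing Step 1 at lengths $m$ and $m+k$: the deleted cycle contributes a strictly positive missing term because some word escapes.

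\textbf{Step 3: cyclic averaging and padding.} We concatenate about $2k$ shifted copies of $u$, each of length $\le 2k$, with padding. Every one of the $\le k$ phases is then killed by some copy, because cyclic averaging over the $k$ offsets produces a copy where that phase lacks support. This gives total length $\le 2k\cdot 2k-3k$ after trimming overlaps of size $k-1$ at the ends, matching $4k^2-3k$. We then check that the product word lies outside $\Fac(C^*)$: any occurrence induces a phase sequence, and each phase dies at some copy.

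\textbf{Main obstacle.} The hardest part is Step 2, extracting a word of length $O(k)$ with strict deficiency and not merely an asymptotic one. We would first try to prove that the expected phase count over words of length $2k-1$ under $p$ is $<k$ exactly when $C$ is incomplete, using the identity of Step 1 together with the positivity of the weight of some uncompletable word.
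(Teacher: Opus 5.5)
Your outline names the right ingredients, but the objects they act on are never constructed, and Step~3 as described does not give a zero product. For a variable-length code the flower automaton has $L-|C|+1$ states. There is no $k$-element set of ``phases'' on which words act, and no word that implements a cyclic shift, so ``shifted copies of $u$ with padding'' has no meaning yet.

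The paper gets both from a pure codeword. Fix a letter $a$. If no positive power of $a$ lies in $C$, then $a^{2k-1}$ is already uncompletable (Lemma~\ref{lem:pure}). Otherwise a unique $a^r\in C$ with $r\le k$ gives an $r$-cycle, maps $L_0,H$ with $HL_0=I_r$, and the padding inequality $B_{a^t}\le L_0P^tH$ for $t\ge k-1$. The compressed matrices $T_w=HB_wL_0$ are binary of size $r$ and satisfy $T_uT_v=T_{ua^{2kr}v}$ and $T_{a^t}=P^t$. So the rotations $P^{k-1+i}$ ($0\le i<r$), which sum to $J_r$, are realized by the words $a^{k-1+i}$.

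The killing is then a descent on \emph{total} mass. With $T=T_v$, if $\mu(T)<r$ and $\mu(M)>0$, then $\sum_{i=0}^{r-1}\mu(MP^{k-1+i}T)=\mu(M)\mu(T)<r\mu(M)$. So some rotation strictly lowers $\mu(M)$, and at most $\mu(T)\le r-1$ rounds reach zero. Killing each phase at some copy is not enough. A path's phase changes along the word, and a row of $T_v$ may contain several ones: for $C=\{aa,b,ba\}$ one gets $T_b=\bigl(\begin{smallmatrix}1&1\\0&0\end{smallmatrix}\bigr)$. So a dead phase can be repopulated.

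The resulting word $a^{k-1}va^{k-1+i_1}v\cdots a^{k-1+i_m}va^{k-1}$ contains at most $r\le k$ copies of $v$. The padding inequality bounds its path matrix by $L_0P^{k-1}(TP^{k-1+i_1}T\cdots P^{k-1+i_m}T)P^{k-1}H=0$. Its length is at most $r(2k-1)+(r-1)(k+r-2)+2(k-1)\le4k^2-3k$. Your $2k$ copies of length up to $2k$ exceed $4k^2-3k$ before any padding, and no ``trimming of overlaps'' recovers that.

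Step~2, which you rightly call the crux, is left open. The mechanism you suggest (an escaping word, or the positive weight of some uncompletable word) cannot produce a word of length $O(k)$, since the shortest uncompletable word can have length $k^2+k-1$ (Proposition~\ref{prop:lower}). The bounded identity of Step~1 only gives the asymptotic fact that long words are mostly uncompletable. Two ingredients close the gap.

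First, an a priori ceiling $\mu(T_w)\le r$ for all $w$. Take $M$ of maximum mass $m$ in the finite semigroup $\{T_w\}$; then $m^2=\sum_{i=0}^{r-1}\mu(MP^iM)\le rm$. ``Deficient'' must mean mass below this ceiling, which is exactly the hypothesis the descent uses.

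Second, the petal-deletion bijection gives the exact recurrence $A^n=\sum_{c\in C}A^{n-|c|}$ for $n\ge2k-1$, hence $F_n=\sum_cF_{n-|c|}$ for $F_n=\sum_{|w|=n}\mu(T_w)$. Suppose every word of length at most $2k-1$ had mass $r$. Then $F_j=rd^j$ for $j\le2k-1$. The recurrence at $n=2k-1$ involves only lengths $k-1,\dots,2k-1$, so it gives Kraft equality $\sum_cd^{-|c|}=1$, hence completeness. Therefore some $v$ with $|v|\le2k-1$ has $\mu(T_v)<r$.

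That last implication is Sch\"utzenberger's criterion, and your weighting does not avoid it. Write $p=\lambda q$ with $q$ a probability vector. The same computation gives $\sum_cq(c)=1$, which contradicts your normalization $\sum_cp(c)=1$ only if $\lambda>1$, that is, only if $\sum_cq(c)<1$. Knowing this for an incomplete code is the same criterion in Bernoulli form. Uniform weights suffice, and the paper proves the criterion directly in Appendix~\ref{app:kraft}.
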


A word satisfying this bound can be found in polynomial time. Given an
explicitly listed code and its alphabet, the algorithm in
Section~\ref{sec:algorithm} either certifies completeness or returns such a word.

The code bound also yields a bound for zero matrix products. For a family
$\mathcal M=\{M_a:a\in\Sigma\}$ of $n\times n$ matrices over $\N$, write
$M_w=M_{a_1}\cdots M_{a_t}$ for $w=a_1\cdots a_t$. Its \emph{support graph}
has an edge $p\to q$ whenever $M_a(p,q)>0$ for some $a$. A \emph{cycle hub}
of a strongly connected component is a vertex whose deletion leaves an
acyclic graph. A singleton with no loop satisfies this condition as well.

\begin{theorem}\label{thm:mortality}
Let $\mathcal M$ be a nonempty finite family of nonnegative integer matrices
of dimension $n\ge1$, with joint spectral radius at most one. Suppose every
strongly connected component of its support graph has a cycle hub.
If some product is zero, there is a zero product of length at most $4n^2-3n$.
The quadratic order is optimal, even for two partial deterministic transition
matrices. In particular, the theorem applies when the generated monoid is finite.
\end{theorem}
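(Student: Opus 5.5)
The plan is to reduce Theorem~\ref{thm:mortality} to Theorem~\ref{thm:quadratic} by turning each strongly connected component with a cycle hub into a finite code of first-return words. The reduction proceeds in five steps.

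\textbf{Step 1: reduce to one strongly connected component.} Order the components of the support graph topologically. A product $M_w$ is zero if and only if no path labelled $w$ exists in the support graph. Any walk passes through components in topological order and stays only a bounded number of consecutive steps in transient (acyclic) parts. I therefore expect the mortal-word question to localize: $w$ kills everything exactly when no component can carry a long enough factor of $w$, together with the transitions between components. To make this precise, I will concatenate short killing words for the components in a suitable order, so that lengths add up to at most $\sum_i (4n_i^2-3n_i)+O(n)\le 4n^2-3n$ by superadditivity of $x\mapsto 4x^2-3x$. I will have to check this combination step carefully. A cleaner alternative is to add a single global hub through a quotient, but I would try the additive route first.

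\textbf{Step 2: first-return codes.} Fix a strongly connected component $S$ with hub $h$ and $|S|=m$. Since $S\setminus\{h\}$ is acyclic, every closed walk from $h$ decomposes uniquely into first returns to $h$, and each first return has length at most $m$. Let $C$ be the multiset of labels of first-return paths $h\to h$, counted with multiplicity $M_a(p,q)$. The joint spectral radius bound should force these labels to be distinct and to form a uniquely decipherable code. Otherwise two distinct factorizations of the same word would give $M_w(h,h)\ge 2$ for that $w$. Powers of $w$ would then make $M_{w^j}(h,h)\ge 2^j$, which grows exponentially and contradicts a joint spectral radius at most one. The same argument shows every entry in the component is $0$ or $1$ along returns. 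So $C$ is a finite code with maximum length $k\le m$.

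\textbf{Step 3: transfer uncompletable words to zero products.} If some product is zero, I claim $C$ is incomplete. Take an uncompletable word $u$ of length at most $4m^2-3m$ from Theorem~\ref{thm:quadratic}, applied with $k\le m$. I then need $u$ to annihilate every path inside $S$, not just closed paths at $h$. Any path inside $S$ labelled $u$ extends, by strong connectivity, to a closed walk through $h$ that contains $u$ as a factor. That closed walk factors into first returns, which would make $u\in\Fac(C^*)$, a contradiction. Hence $u$ labels no path in $S$. Conversely, if $C$ were complete, a zero product restricted to $S$ would be impossible. For the corner case where $S$ is a loopless singleton, $C$ is empty and the component is killed by any word of length $1$ within it.

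\textbf{Step 4: main obstacle.} The hard part is the assembly in Step~1. Killing each component separately does not immediately kill paths that traverse several components. I would handle this with induction on the condensation order. Let $w_1$ kill the paths in the sink-most components. For the next component $S$, use $u_S w$, and argue that a path labelled $u_S w$ must leave $S$ before the end of $u_S$ and then die inside $w$. This uses that the support restricted to paths starting in lower components is already killed by $w$. The lengths add, and $\sum (4n_i^2-3n_i)\le 4n^2-3n$. Transient singletons contribute length $1$ each, which fits within the same bound.

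\textbf{Step 5: optimality and finite monoids.} Optimality is cited from the explicit binary partial deterministic family with $2k-1$ states, whose shortest zero product has length $k^2+k-1$. The case of a finite generated monoid follows because bounded entries force joint spectral radius at most one. Entries bounded by $1$ on cycles force each component's cycle structure to be a union of simple cycles with unique paths. The hub condition would then need to be verified there, or else taken from the paper's later argument.
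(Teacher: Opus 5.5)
Your Steps~2--4 are essentially the paper's proof of the upper bound. You amplify two equally labelled closed walks at the hub to $M_{w^j}(h,h)\ge2^j$, which gives a first-return code with maximum length at most $m$ whose factors are exactly the path labels of the component. A global killing word makes that code incomplete, Theorem~\ref{thm:quadratic} then kills each component, and your sink-first induction produces exactly the paper's concatenation $w_{D_1}\cdots w_{D_s}$ along a topological order. No additive term is needed. The $+O(n)$ floated in Step~1 would in fact break $4n^2-3n$ when the whole support graph is one component, so drop it.

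The gap is Step~5. Optimality is part of the statement, and you only cite it; you need the construction.
\begin{enumerate}
\item Put $u=a^{k-1}b$ and $X_k=\{a,b\}^k\setminus\{u\}$. This is a code since all its words have length $k$.
\item A word $z$ lies outside $\Fac(X_k^*)$ exactly when the starts of occurrences of $u$ in $z$ meet every residue class modulo $k$. An occurrence at $j$ excludes the block alignment whose blocks start at positions $\equiv j$. In any other alignment, the cut-off end blocks can be completed to words different from $u$.
\item Two occurrences of $u$ cannot overlap. So if you choose one occurrence per class, consecutive starts are at distance at least $k+1$, giving $|z|\ge(k-1)(k+1)+k=k^2+k-1$. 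This is attained by $(ua)^{k-1}u$.
\item Realize $X_k$ as the first-return code at $q_0$ of the partial deterministic automaton on states $q_0,m_1,\dots,m_{k-1},d_1,\dots,d_{k-1}$. Here $m_i$ records that the current block has so far been all $a$'s, and $b$ is undefined at $m_{k-1}$.
\item Transitions avoiding $q_0$ raise the index, so $q_0$ is a hub. Partial-function matrices generate a finite monoid. By \eqref{eq:return-factors}, killing words are exactly uncompletable words.
\item With $n=2k-1$ states this gives $(n^2+4n-1)/4$, and an isolated extra state covers even $n$.
\end{enumerate}

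You also misread the finite-monoid clause. It only replaces the spectral hypothesis, and your observation that bounded entries force joint spectral radius at most one already settles it. The hub condition stays a hypothesis and cannot be derived from finiteness. For example, $M_a=I_2$ and the transposition matrix $M_b$ generate a group of order two. Yet their single component has a loop at each vertex, so no vertex meets every cycle. Drop the attempt to verify the hub condition there.
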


The connection is through first returns to a hub. Their labels form a finite
code whose uncompletable words are the killing words of the component.
The lower-bound family has $2k-1$ states and shortest zero product of length
$k^2+k-1$; it also gives the matching quadratic order for codes.

For the code theorem, path matrices turn missing words into zero products, as in
Kiefer and Mascle~\cite{kiefer-mascle}. Averaging over cyclic shifts is related
to methods for synchronizing automata~\cite{steinberg}. If a pure codeword
$a^r$ exists, we work on its cycle of length $r\le k$. A finite counting
identity gives a short word whose compressed matrix has fewer than $r$
nonzero entries. Alternating copies of this word with suitable powers of $a$
reduces the compressed product to zero. Further powers of $a$ at the two ends
give a word labelling no path in the flower automaton. If no positive power of
the chosen letter is a codeword, $a^{2k-1}$ already suffices.

\paragraph{Related work.}
N\'eraud and Selmi~\cite{neraud-selmi} obtained
$U(C)\le k^2(\delta+1)(\delta+2)-k$ for incomplete codes with deciphering delay
$\delta$: reading $\delta$ further codewords determines the first codeword.
Our theorem requires no finite-delay assumption.

Kiefer and Mascle~\cite{kiefer-mascle} proved
\begin{equation}\label{eq:km}
U(C)\le(k+1)k^2(L+2)(L+1)
\end{equation}
and gave a polynomial-time construction, leaving polynomial dependence on
$k$ alone as an open question. The parameters differ even over a fixed
alphabet: the binary code $\{0,1\}^k$ has total length $k2^k$.
Their construction uses average matrices for mortality decision and a
linear-span method for short witnesses~\cite[Section 2 and Lemma 16]{kiefer-mascle}.
We use conditional expectation to attain the bound in terms of $k$.

For complete unambiguous finite automata, Kiefer and
Ryzhikov~\cite[Theorem 31]{kiefer-ryzhikov} construct a minimum-rank word of
length $O(N^4)$, represented by an $O(N^2)$-size straight-line program, in
$O(dN^4)$ time and $O(N^3)$ space. Here $N$ is the number of states and
$d=|\Sigma|$; these bounds do not give a bound in $k$ alone. Without unique
decipherability, Mika and Szyku\l{}a~\cite{mika-szykula} constructed finite
sets whose shortest uncompletable word has exponential length in the maximum
word length.

For matrix mortality, Kiefer and Mascle's general result gives the bound
$(n^5+15n^4)/16$ under joint spectral radius at most one
\cite[Theorem 1]{kiefer-mascle}. The cycle-hub condition permits the
quadratic bound in Theorem~\ref{thm:mortality}.

Section~\ref{sec:code-bound} proves the code bound, and
Section~\ref{sec:algorithm} gives its polynomial-time construction.
Section~\ref{sec:mortality} transfers the bound to matrices;
Section~\ref{sec:lower} shows that quadratic length is necessary.
The appendices contain the repeated-product estimate, the finite Kraft proof,
and the formal verification details.

\Needspace{10\baselineskip}
\section{The quadratic bound for codes}\label{sec:code-bound}
The flower automaton represents factors of codeword concatenations by
labelled paths. We reduce its path matrices to matrices of size at most
$k$, find a short word whose matrix has deficient mass, and use cyclic
averaging to obtain a zero product.

\subsection{Words as paths}\label{sec:flower}
The \emph{flower automaton} has a central vertex $o$ and, for each codeword
$c=c_1\cdots c_\ell$, a path of length $\ell$ from $o$ back to $o$,
with successive labels $c_1,\ldots,c_\ell$. Its $\ell-1$ internal vertices
belong only to that codeword. A length-one word contributes a loop at $o$.
Its vertex set $Q$ has
\[
|Q|=1+\sum_{c\in C}(|c|-1)=L-|C|+1.
\]
For each $b\in\Sigma$, let $B_b$ be its adjacency matrix over $\N$. For
$w=b_1\cdots b_t$, define
\[
B_w=B_{b_1}\cdots B_{b_t},\qquad B_\eps=I.
\]
All matrix products are taken over $\N$. The entry $B_w(p,q)$ counts paths
from $p$ to $q$ labelled by $w$, and concatenation gives $B_{uv}=B_uB_v$.

\begin{lemma}[Paths and factors]\label{lem:unambiguous}
If $C$ is a code, every entry of every $B_w$ belongs to $\{0,1\}$.
For every finite set of nonempty words, whether or not it is a code,
\begin{equation}\label{eq:absence}
B_w=0\quad\Longleftrightarrow\quad w\notin\Fac(C^*).
\end{equation}
\end{lemma}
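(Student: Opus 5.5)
The plan is to describe paths in the flower automaton explicitly, and then read both claims off that description. A path labelled $w=b_1\cdots b_t$ from $p$ to $q$ is a sequence of edges. Each edge lies inside a unique petal, the petal of some codeword $c$. Each vertex other than $o$ lies in exactly one petal, at a fixed position $j$ with $1\le j\le |c|-1$, so its outgoing edge within that petal is unique. The edges leaving $o$ are exactly the first edges of all petals.

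So a path is determined by three pieces of data. The first is its starting vertex $p$. The second is the sequence of codewords $c$ whose petals are entered each time the path leaves $o$. The third is the ending vertex, which is implied by the other two. Reading labels along the path therefore gives a factorization
\[
w = s\, c_1 c_2\cdots c_m\, u,
\]
where $s$ is a suffix of the codeword through $p$ (or $s=\eps$ if $p=o$) and $u$ is a proper prefix of some codeword (or $u=\eps$ if $q=o$). There is one degenerate case: the path never visits $o$, and then $w$ is a proper interior factor of a single codeword. Conversely, every such factorization gives a path.

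For the equivalence \eqref{eq:absence}, first suppose $B_w\neq0$. Then some path exists. Its factorization exhibits $w$ as a factor of $c\,c_1\cdots c_m\,c'$, where $c$ is the codeword through $p$ and $c'$ is the codeword whose prefix is $u$. In the degenerate case $w$ is a factor of a single codeword. Either way $w\in\Fac(C^*)$.

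Conversely, suppose $w$ is a factor of $d_1\cdots d_n$ with $d_i\in C$. The concatenation $d_1\cdots d_n$ labels a closed path at $o$ that runs through the petals of $d_1,\dots,d_n$ in turn. The subpath covering the positions occupied by $w$ then has label $w$, so some entry of $B_w$ is positive. This direction uses nothing about unique decipherability. It also covers $w=\eps$, since $B_\eps=I\neq0$.

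For the $0/1$ claim, suppose two distinct paths from $p$ to $q$ both carry label $w$. Both start at $p$, and every vertex other than $o$ has a single outgoing edge. Hence the two paths coincide until they first leave $o$ along different petals, say at the same position of $w$. Write $c$ for the codeword through $p$ and $c'$ for the codeword whose proper prefix ends at $q$. Then $c\,w\,c'$, with the appropriate prefix of $c$ and suffix of $c'$ absorbed into the boundary segments $s$ and $u$, has two factorizations over $C$. These factorizations are obtained by completing both paths to closed paths at $o$: prepend the common petal segment from $o$ to $p$, and append the segment from $q$ back to $o$.

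The resulting two closed paths at $o$ have the same label, a word of $C^*$. They give codeword sequences that differ at the first divergence point, where the petal entered differs. Both sequences decode to the same word, and at the divergence they either pick different codewords or the same codeword at different offsets. The latter is impossible, since both paths are at $o$ at that moment and so at the same offset. This contradicts the code property.

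The main obstacle is handling the endpoints carefully. The completion must produce genuine elements of $C^*$, and the point of divergence must correspond to a position where both factorizations are at a codeword boundary. Otherwise the two factorizations could agree as sequences while still coming from distinct paths. The argument secures this by observing that divergence can only occur at $o$, which is where both paths are simultaneously at a codeword boundary.
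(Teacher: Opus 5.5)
Your proof is correct and takes essentially the paper's route. Both directions of \eqref{eq:absence} come from extending a path to a closed path at $o$, or restricting such a path to a subpath, and the $0/1$ claim comes from completing two equally labelled paths to closed paths at $o$ whose visits to $o$ give different codeword factorizations; your divergence-at-$o$ argument just makes explicit why those factorizations differ, which the paper leaves implicit.
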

\begin{proof}
Every vertex lies on a path from $o$ back to $o$. If two distinct paths from
$p$ to $q$ had the same label, attach the same path from $o$ to $p$ and the same
path from $q$ to $o$ to both. The resulting distinct closed paths have the same
label. Their visits to $o$ split them into different codeword factorizations,
contradicting unique decipherability.

Any path labelled by $w$ extends at both ends to a closed path at $o$, whose
label lies in $C^*$. Conversely, an occurrence of $w$ in a concatenation of
codewords gives a path labelled by $w$. This proves~\eqref{eq:absence}.
\end{proof}
An internal vertex has a unique remaining path to $o$ of length at most
$k-1$. A path avoiding $o$ entirely lies within one petal and has length at
most $k-2$ when $k\ge2$. A visit to $o$ includes a visit at either endpoint.

\subsection{Compression to a cycle}\label{sec:compression}
Fix a letter $a\in\Sigma$. Such a letter exists because $C$ contains a
nonempty word. We first distinguish whether $C$ contains a power of $a$.

\begin{lemma}[Pure powers]\label{lem:pure}
If $C$ contains no positive power of $a$, then $a^{2k-1}$ is uncompletable.
Otherwise there is a unique $r\ge1$ with $a^r\in C$, and $r\le k$.
\end{lemma}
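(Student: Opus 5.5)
We split according to the two assertions of the lemma and use Lemma~\ref{lem:unambiguous} to translate uncompletability into the absence of labelled paths in the flower automaton.

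\emph{Uniqueness and size.} Suppose $a^r,a^s\in C$ with $1\le r<s$. Then
\[
(a^r)^s=(a^s)^r
\]
gives two distinct factorizations of the same word: one has $s$ factors and the other has $r$. This contradicts unique decipherability. So at most one positive power of $a$ lies in $C$. If $a^r\in C$, then $r=|a^r|\le k$ because every codeword has length at most $k$.

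\emph{No power of $a$ in $C$.} Suppose, for contradiction, that $a^{2k-1}\in\Fac(C^*)$. By \eqref{eq:absence} there is a path $\pi$ labelled $a^{2k-1}$ in the flower automaton. We show that $\pi$ must traverse a whole petal.
\begin{enumerate}
\item If $\pi$ never visits $o$, it lies inside a single petal. Its length is then at most $k-2$ (for $k\ge2$), which is less than $2k-1$. For $k=1$ every vertex is $o$, so this case cannot occur.
\item Otherwise $\pi$ visits $o$. Let $i$ be the first position and $j$ the last position on $\pi$ at which $o$ is visited.
\begin{itemize}
\item Before $i$, the path runs from its start to $o$. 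This segment is the tail of a petal, so it has length at most $k-1$.
\item After $j$, the path runs from $o$ to its end without returning to $o$. This segment is a proper prefix of a petal, so it also has length at most $k-1$.
\end{itemize}
Hence $j-i\ge(2k-1)-2(k-1)=1$. So between positions $i$ and $j$ the path contains a nonempty closed subpath at $o$. Its first return to $o$ traces a full petal, and that petal is labelled by a block of consecutive letters of $a^{2k-1}$. Such a block is a positive power of $a$, so some codeword is a positive power of $a$, a contradiction.
\end{enumerate}
Therefore $a^{2k-1}\notin\Fac(C^*)$, i.e.\ it is uncompletable.

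The only delicate point is the length bookkeeping for the segments before the first and after the last visit to $o$. It rests on the observation stated after Lemma~\ref{lem:unambiguous}: an internal vertex reaches $o$ by a path of length at most $k-1$, and leaving $o$ one stays inside a petal for at most $k-1$ steps before the path ends.
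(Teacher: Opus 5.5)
Your proof is correct and follows the paper's argument. Uniqueness comes from a commutation identity giving two distinct factorizations: you use $(a^r)^s=(a^s)^r$, the paper uses $a^ra^t=a^ta^r$. Uncompletability of $a^{2k-1}$ comes from showing that any path with that label must traverse a complete petal, which would then be a pure power of $a$; the only difference is that the paper notes at least $k$ letters remain after the first visit to $o$, whereas you compare the first and last visits, which is the same length count.
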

\begin{proof}
A path labelled by $a^{2k-1}$ reaches $o$ after at most $k-1$ letters and has
at least $k$ letters remaining. Its next complete petal has length at most
$k$ and consists only of $a$'s. If no such codeword exists, the path cannot
exist, and Lemma~\ref{lem:unambiguous} applies. If distinct $a^r,a^t\in C$
existed, $a^ra^t=a^ta^r$ would give different codeword factorizations. Thus
the pure codeword is unique, and its length is at most $k$.
\end{proof}

Suppose now that $a^r\in C$. A sufficiently long path labelled only by $a$
must reach $o$. Before this visit it follows a suffix of its starting petal;
afterwards, any complete petals it traverses must be the cycle for $a^r$.
It ends by following a prefix of its final petal. We can therefore describe
such a path by its initial suffix, its turns around the $a$-cycle, and its
final prefix.

Index the cycle vertices by $\Z/r\Z$, with $o$ at index $0$. Movement along
the cycle is represented by the permutation matrix
\[
P(i,j)=\begin{cases}1&j=i+1,\\0&\text{otherwise.}\end{cases}
\]
For a vertex $q$, let $\alpha(q)$ be the length of the remaining petal suffix
if that suffix consists only of $a$'s; otherwise leave $\alpha(q)$ undefined.
Define $\beta(q)$ in the same way for the petal prefix ending at $q$, and set
$\alpha(o)=\beta(o)=0$. Whenever defined, these lengths lie between $0$ and $k-1$.
The initial suffix and final prefix determine positions modulo $r$:
starting at position $-\alpha(q)$ reaches $o$ after $\alpha(q)$ letters,
whereas reading $\beta(q)$ letters from $o$ reaches position $\beta(q)$.
These positions define the matrices $L_0\in\N^{Q\times r}$ and
$H\in\N^{r\times Q}$:
\begin{align*}
L_0(q,i)=1&\ \Longleftrightarrow\ \alpha(q)\text{ is defined and }i\equiv-\alpha(q)\pmod r,\\
H(i,q)=1&\ \Longleftrightarrow\ \beta(q)\text{ is defined and }i\equiv\beta(q)\pmod r,
\end{align*}
with all other entries zero. Both $\alpha(q)$ and $\beta(q)$ are defined
exactly when $q$ lies on the pure petal. There, $L_0$ and $H$ restrict to
the identity, so
\begin{equation}\label{eq:HL}
HL_0=I_r.
\end{equation}

\begin{lemma}[Padding identities]\label{lem:padding}
Entrywise, for every integer $t\ge k-1$,
\begin{equation}\label{eq:padding}
B_{a^t}\le L_0P^tH.
\end{equation}
For $t\ge2k-2$, equality holds.
\end{lemma}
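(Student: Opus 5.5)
We compare both sides entry by entry. By Lemma~\ref{lem:unambiguous}, $B_{a^t}(p,q)\in\{0,1\}$. For the right-hand side we compute
\[
(L_0P^tH)(p,q)=\sum_{i,j}L_0(p,i)P^t(i,j)H(j,q).
\]
Each row of $L_0$ and each column of $H$ has at most one nonzero entry, and $P^t$ is a permutation matrix. Hence the sum is $1$ exactly when $\alpha(p)$ and $\beta(q)$ are both defined and $-\alpha(p)+t\equiv\beta(q)\pmod r$; otherwise it is $0$. It therefore suffices to prove two things. First, if a path labelled $a^t$ runs from $p$ to $q$ with $t\ge k-1$, then $\alpha(p)$ and $\beta(q)$ are defined and $t\equiv\alpha(p)+\beta(q)\pmod r$. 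Second, when $t\ge2k-2$, this condition conversely produces such a path.

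\textbf{Inequality.} Take a path labelled $a^t$ from $p$ to $q$ with $t\ge k-1$. Suppose first that $p\ne o$. The unique continuation from $p$ to $o$ has length at most $k-1$, while a path avoiding $o$ has length at most $k-2$ (for $k=1$ every vertex is $o$). So the path reaches $o$ after exactly the remaining petal suffix of $p$. That suffix consists of $a$'s, so $\alpha(p)$ is defined and the first visit to $o$ occurs after $\alpha(p)$ letters; if $p=o$ this holds trivially with $\alpha(o)=0$. Symmetrically, consider the last visit to $o$. The rest of the path is a petal prefix ending at $q$ and labelled by $a$'s, so $\beta(q)$ is defined and equals that length. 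Between the first and last visits the path traverses complete petals labelled only by $a$. By Lemma~\ref{lem:pure} the only such petal is the $a^r$ cycle, so the middle portion has length divisible by $r$. This gives $t=\alpha(p)+mr+\beta(q)$ with $m\ge0$, which is the congruence.

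\textbf{Equality for $t\ge2k-2$.} Suppose $\alpha(p)$ and $\beta(q)$ are defined and $t\equiv\alpha(p)+\beta(q)\pmod r$. Since $\alpha(p),\beta(q)\le k-1$, we get $t-\alpha(p)-\beta(q)\ge0$, and this difference is a multiple $mr$ of $r$. Now build the path: follow the $a$-suffix from $p$ to $o$, go around the $a^r$ cycle $m$ times, then follow the $a$-prefix from $o$ to $q$. Its label is $a^t$, so $B_{a^t}(p,q)\ge1$, which matches the right-hand side.

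\textbf{Main obstacle.} The main care is in the inequality step. One must check that a path of length $\ge k-1$ cannot stay inside a petal, and that the first visit to $o$ really occurs after $\alpha(p)$ letters. Both follow from the remark after Lemma~\ref{lem:unambiguous} about paths avoiding $o$. The endpoint cases $p=o$ or $q=o$ also need care, but they are consistent with $\alpha(o)=\beta(o)=0$.
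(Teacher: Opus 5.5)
Your proposal is correct and takes essentially the same approach as the paper: you decompose an $a$-labelled path of length at least $k-1$ into the forced suffix from $p$, complete traversals of the $a^r$ petal, and the prefix into $q$, and compare this with the single possible nonzero term of $L_0P^tH$. For the converse when $t\ge 2k-2$, you use $\alpha(p),\beta(q)\le k-1$ to build the path; your first-visit/last-visit bookkeeping spells out what the paper states in one line, and you fold the case $k=1$ into the general argument instead of treating it separately.
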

\begin{proof}
First let $k\ge2$. Every path of length at least $k-1$ visits $o$. An
$a$-labelled path from $p$ to $q$ that visits $o$ consists of the remaining
suffix from $p$, a number of pure-petal traversals, and the prefix ending
at $q$. Thus it exists exactly when $\alpha(p),\beta(q)$ are defined and
\begin{equation}\label{eq:padding-length}
t=\alpha(p)+jr+\beta(q)\quad\text{for some integer }j\ge0.
\end{equation}
The corresponding entry of $L_0P^tH$ is one exactly when those lengths are
defined and $t\equiv\alpha(p)+\beta(q)\pmod r$. This proves the inequality,
since path counts are at most one. If $t\ge2k-2$, then
$t\ge\alpha(p)+\beta(q)$, and the congruence also gives $j\ge0$.
For $k=1$, there is only $o$ and $r=1$, so both statements hold directly.
\end{proof}

Set $b=2kr$, a multiple of $r$ at least $2k-2$. Then
\begin{equation}\label{eq:idempotent}
E:=B_{a^b}=L_0H,\qquad E^2=E.
\end{equation}
The compressed matrix of a word is
\begin{equation}\label{eq:compressed}
T_w=HB_wL_0.
\end{equation}
\begin{lemma}\label{lem:semigroup}
Every $T_w$ is binary, $T_\eps=I_r$, and
\begin{equation}\label{eq:compressed-product}
T_uT_v=T_{u a^b v}.
\end{equation}
Moreover, $T_{a^t}=P^t$ for every $t\ge0$. The family
$\mathcal S=\{T_w:w\in\Sigma^*\}$ is finite, closed under multiplication,
and contains all powers of $P$.
\end{lemma}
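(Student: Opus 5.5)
The plan is to derive every claim from the padding identities of Lemma~\ref{lem:padding}, the idempotent factorisation \eqref{eq:idempotent}, and the relation $HL_0=I_r$ from \eqref{eq:HL}. The order will be: the product rule \eqref{eq:compressed-product} first, then the special values $T_\eps$ and $T_{a^t}$, then binarity, and finally finiteness and closure.

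For the product rule, expand
\[
T_{ua^bv}=HB_uB_{a^b}B_vL_0=HB_u(L_0H)B_vL_0=(HB_uL_0)(HB_vL_0)=T_uT_v .
\]
This uses multiplicativity of $w\mapsto B_w$ and $E=L_0H$. Next, $T_\eps=HL_0=I_r$ by \eqref{eq:HL}.

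For $T_{a^t}$, I would first compute the large powers. Let $t\ge0$ be arbitrary and choose $s=b+t+b$, which satisfies $s\ge 2k-2$. By equality in Lemma~\ref{lem:padding}, $B_{a^s}=L_0P^sH$. Then
\[
HB_{a^s}L_0=(HL_0)P^s(HL_0)=P^s=P^t ,
\]
because $b$ is a multiple of $r$ and $P^r=I$. On the other hand, the product rule applied twice gives $T_{a^s}=T_\eps T_{a^t}T_\eps=T_{a^t}$, since $T_\eps=I_r$. Hence $T_{a^t}=P^t$ for every $t$, and $P^t=T_{a^t}\in\mathcal S$.

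For binarity, I would argue that $T_w(i,j)$ counts triples $(p,q)$ with $H(i,p)=1$, $B_w(p,q)=1$ and $L_0(q,j)=1$. By the product rule, $T_w=T_\eps T_wT_\eps=HB_{a^bwa^b}L_0$. Each nonzero term in this expansion corresponds to a path labelled $a^bwa^b$ through $p$ and $q$. I expect this step to be the main obstacle, since it needs care. The claim to establish is that such a triple yields a path from the cycle vertex at position $i$ to the cycle vertex at position $j$, labelled by a word of the form $a^{x}wa^{y}$ with fixed lengths. Prepend $\beta(p)\equiv i$ letters and append $\alpha(q)$ letters that return to the cycle.

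Making the lengths independent of $p,q$ requires padding. Write $\beta(p)=i+mr$ and take the prefix $a^{b}$ read from the cycle vertex $i$. The path must then be chosen to pass through $p$. This works because $b\ge 2k-2$ and Lemma~\ref{lem:padding} applies with equality. Distinct triples give distinct such paths, all with the same endpoints and the same label $a^{b'}wa^{b''}$. Lemma~\ref{lem:unambiguous} bounds the number of such paths by one. Therefore every entry of $T_w$ is at most one, so $T_w$ is binary.

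Finally, $\mathcal S$ is a set of binary $r\times r$ matrices, so it has at most $2^{r^2}$ elements and is finite. It is closed under multiplication because $T_uT_v=T_{ua^bv}\in\mathcal S$.
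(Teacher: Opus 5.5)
Your proposal is correct and follows the paper's proof: the product rule via $B_{a^b}=E=L_0H$, $T_\eps=HL_0=I_r$, and $T_{a^t}=T_\eps T_{a^t}T_\eps=HB_{a^{2b+t}}L_0=P^t$ are exactly its steps. For binarity, the paper observes that $T_w$ is the submatrix on the pure cycle of $EB_wE=L_0T_wH=B_{a^bwa^b}$, because the cycle rows of $L_0$ and the cycle columns of $H$ are unit vectors, and this matrix is binary by Lemma~\ref{lem:unambiguous}. Your path injection, using the $a^b$-paths from the cycle vertex at position $i$ to $p$ and from $q$ to the cycle vertex at position $j$, is this same identity written entrywise; the earlier ``prepend $\beta(p)$ letters'' phrasing is slightly off, but you replace it with these $a^b$-paths.
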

\begin{proof}
The matrix $EB_wE=L_0T_wH=B_{a^bwa^b}$ is binary. Its submatrix on the pure
cycle is $T_w$, proving the first claim. Equations~\eqref{eq:HL}
and~\eqref{eq:idempotent} give the identity and product statements. Also
$HE=H$ and $EL_0=L_0$, so
\[
T_{a^t}=HEB_{a^t}EL_0=HB_{a^{2b+t}}L_0
=HL_0P^{2b+t}HL_0=P^t.
\]
Here we use the equality case of Lemma~\ref{lem:padding} and the fact that
$r$ divides $b$. Finiteness follows because each $T_w$ is a zero-one matrix
of size $r$.
\end{proof}

The relation between the two actions is shown in Figure~\ref{fig:compression}.
The identity $(EB_wE)L_0=L_0T_w$ makes the square commute, and $HL_0=I_r$
shows that the vertical maps are injective.

\begin{figure}[ht]
\centering
\begin{tikzpicture}[every node/.style={font=\small}]
\node[ellipse,draw,minimum width=19mm,minimum height=8mm] (x) at (-3,0) {$\N^r$};
\node[ellipse,draw,minimum width=19mm,minimum height=8mm] (y) at (3,0) {$\N^r$};
\node at (0,.85) {$r\le k$};
\draw[fill=black!3] (-3,-1.8) ellipse (1.55 and .68);
\draw[fill=black!3] (3,-1.8) ellipse (1.55 and .68);
\node[ellipse,draw,fill=black!13,minimum width=22mm,minimum height=8mm]
  (lx) at (-3,-1.8) {$L_0\N^r$};
\node[ellipse,draw,fill=black!13,minimum width=22mm,minimum height=8mm]
  (ly) at (3,-1.8) {$L_0\N^r$};
\node at (-3,-2.7) {$\N^Q$};
\node at (3,-2.7) {$\N^Q$};
\draw[proof edge] (x) -- node[above] {$T_w$} (y);
\draw[proof edge] (x) -- node[proof label,left] {$L_0$} (lx);
\draw[proof edge] (y) -- node[proof label,right] {$L_0$} (ly);
\draw[proof edge] (lx) -- node[proof label,above] {$EB_wE=B_{a^bwa^b}$} (ly);
\end{tikzpicture}
\caption{Compression and padding. Matrices act on column vectors.
The shaded image needs only $r\le k$ coordinates. The identity
$(EB_wE)L_0=L_0T_w$ shows that applying the padded word there is the same
as applying $T_w$ in cycle coordinates.}
\label{fig:compression}
\end{figure}

\subsection{Averaging over cyclic shifts}\label{sec:mass}
For a matrix $M$ over $\N$, let $\mass(M)=\sum_{i,j}M(i,j)$ be its
\emph{mass}. Since the entries are nonnegative, $\mass(M)=0$ if and only if
$M=0$. Let $J$ denote the all-ones matrix.

\begin{lemma}[Mass averaging]\label{lem:average}
Let $M,T,R_1,\ldots,R_m$ be square matrices over $\N$ of the same size,
with $\sum_{i=1}^mR_i=J$. Then
\begin{equation}\label{eq:average}
\sum_{i=1}^m\mass(MR_iT)=\mass(M)\mass(T).
\end{equation}
\end{lemma}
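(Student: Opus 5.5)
The plan is to use linearity of mass and the bilinear form $\mass(X)=\one^{\top}X\one$, where $\one$ is the all-ones column vector of the common size. First I would rewrite each term as
\[
\mass(MR_iT)=\one^{\top}MR_iT\one.
\]
Then I would sum over $i$ and pull the sum inside the product using distributivity of matrix multiplication over $\N$:
\[
\sum_{i=1}^m\one^{\top}MR_iT\one=\one^{\top}M\Bigl(\sum_{i=1}^mR_i\Bigr)T\one=\one^{\top}MJT\one.
\]

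Next I would factor the all-ones matrix as $J=\one\one^{\top}$. This gives
\[
\one^{\top}MJT\one=(\one^{\top}M\one)(\one^{\top}T\one)=\mass(M)\mass(T),
\]
which is \eqref{eq:average}. Equivalently, one can expand entrywise: $\sum_i (MR_iT)(p,s)=\sum_{q,u}M(p,q)J(q,u)T(u,s)$. Summing over $p,s$ and using $J(q,u)=1$ separates the double sum into the product of the row-sum totals of $M$ and the column-sum totals of $T$.

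No step is a real obstacle. The only point to check is that everything takes place in the commutative semiring $\N$, where distributivity and reassociation of finite sums hold, so no subtraction or cancellation is needed. The hypothesis $\sum_iR_i=J$ is used only once, to collapse the middle factor. In the intended application the $R_i$ will presumably be the cyclic shifts $P^i$ for $i=0,\ldots,r-1$, which sum to $J$. The lemma then says that some shift $i$ achieves $\mass(MP^iT)\le\mass(M)\mass(T)/r$, which is the averaging step used later.
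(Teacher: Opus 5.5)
Your proposal is correct and follows essentially the same route as the paper: sum the $R_i$ first to reduce to $\mass(MJT)$, then separate that quantity into $\mass(M)\mass(T)$. Your factorization $J=\one\one^{\mathsf T}$ is just the vector form of the paper's entrywise expansion $\sum_{p,q,u,v}M(p,u)T(v,q)$.
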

\begin{proof}
Summing the matrices first gives $\mass(MJT)$. Expanding its entries,
\[
\mass(MJT)=\sum_{p,q,u,v}M(p,u)T(v,q)
=\left(\sum_{p,u}M(p,u)\right)\left(\sum_{v,q}T(v,q)\right).
\]
\end{proof}
For every integer $p\ge0$,
\begin{equation}\label{eq:rotations}
\sum_{i=0}^{r-1}P^{p+i}=J_r:
\end{equation}
exactly one of these $r$ shifts connects each ordered pair of cycle vertices.

\Needspace{7\baselineskip}
\begin{proposition}[Mass bound]\label{prop:mass}
For every word $w$, $0\le\mass(T_w)\le r$.
\end{proposition}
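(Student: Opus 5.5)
The plan is to argue by contradiction, using Lemma~\ref{lem:average} together with the closure properties of $\mathcal S$ from Lemma~\ref{lem:semigroup}. The key point is that every element of $\mathcal S$ is a binary $r\times r$ matrix, so its mass is at most $r^2$. Averaging over cyclic shifts, however, makes the mass of long products grow geometrically as soon as a single word has mass exceeding $r$. The lower bound $0\le\mass(T_w)$ is trivial, since the entries lie in $\N$.

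Suppose $m:=\mass(T_w)>r$. Apply Lemma~\ref{lem:average} with $M=X$, $T=T_w$ and $R_i=P^{i}$ for $i=0,\dots,r-1$; these sum to $J_r$ by~\eqref{eq:rotations}. This gives
\[
\sum_{i=0}^{r-1}\mass(XP^iT_w)=m\,\mass(X)
\]
for any $X$. Hence for some $i$ we have $\mass(XP^iT_w)\ge (m/r)\mass(X)$.

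The next step is to confirm that $XP^iT_w$ lies in $\mathcal S$ whenever $X$ does. If $X=T_u$, then by Lemma~\ref{lem:semigroup} we have $P^i=T_{a^i}$, and~\eqref{eq:compressed-product} gives
\[
T_uT_{a^i}T_w=T_{u a^b a^i a^b w}.
\]
So the new matrix is again some $T_{u'}$, and in particular it is binary.

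Starting from $X_1=T_w$ and iterating, we obtain words $u_n$ with
\[
\mass(T_{u_n})\ge m\,(m/r)^{n-1}.
\]
Since $m/r>1$, the right side exceeds $r^2$ for large $n$. This contradicts the binarity of $T_{u_n}$, so $m\le r$.

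I expect no serious obstacle; the only care needed is checking that each inserted shift $P^i$ is realized within $\mathcal S$ so that binarity applies at every stage. That is exactly what the identity $T_{a^t}=P^t$ and the closure of $\mathcal S$ under multiplication provide. An alternative route through Kraft equality and a path-counting identity seems unnecessary for this inequality. It would only be needed later, for the strict deficiency $\mass(T_w)<r$ of some short word.
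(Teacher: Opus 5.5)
Your proof is correct and uses the same ingredients as the paper. These are the averaging identity over the $r$ rotations $P^i$, the closure of $\mathcal S$ under multiplication together with $P^i\in\mathcal S$, and the a priori bound on mass coming from binarity. The paper packages them extremally: it takes $M\in\mathcal S$ of maximum mass $m$ and applies the identity once to get $m^2=\sum_i\mass(MP^iM)\le rm$. You instead iterate from $T_w$ until geometric growth passes $r^2$, which avoids selecting a maximum but is otherwise the same argument.
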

\begin{proof}
Choose $M\in\mathcal S$ with maximum mass $m$. Each $MP^iM$ also belongs
to $\mathcal S$. By Lemma~\ref{lem:average},
\[
m^2=\sum_{i=0}^{r-1}\mass(MP^iM)\le rm.
\]
Since $I_r\in\mathcal S$, we have $m\ge r\ge1$. Dividing gives $m\le r$.
\end{proof}
If $\mass(T)=s<r$ and $\mass(M)>0$, the same identity gives
\begin{equation}\label{eq:strict-descent}
\sum_{i=0}^{r-1}\mass(MP^{p+i}T)=\mass(M)s<r\mass(M).
\end{equation}
The average is below $\mass(M)$, so at least one rotation strictly reduces the mass.

\subsection{A short deficient word}\label{sec:deficiency}
We now show that incompleteness forces a word of length at most $2k-1$
with compressed mass below $r$. Suppose $a^r\in C$, put $d=|\Sigma|$, and set
\begin{equation}\label{eq:weighted-counts}
A=\sum_{b\in\Sigma}B_b,\qquad x=\one^{\mathsf T}H,\qquad y=L_0\one,
\qquad F_n=xA^ny=\sum_{w\in\Sigma^n}\mass(T_w).
\end{equation}
Thus $F_n$ counts length-$n$ paths, assigning weight $x_p y_q$ to a path
from $p$ to $q$. A complete petal can be deleted without changing this weight.

\begin{lemma}[Petal deletion]\label{lem:deletion}
Let $C$ be a finite set of nonempty words of length at most $k$, with $k\ge1$.
Its flower adjacency matrix satisfies, for every $n\ge2k-1$,
\begin{equation}\label{eq:deletion}
A^n=\sum_{c\in C}A^{n-|c|}.
\end{equation}
Consequently, for any fixed endpoint weights $x,y$,
\begin{equation}\label{eq:finite-recurrence}
F_n=\sum_{c\in C}F_{n-|c|}.
\end{equation}
\end{lemma}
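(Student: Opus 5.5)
Since $A^n(p,q)$ counts all length-$n$ paths from $p$ to $q$ in the flower automaton, the plan is to prove~\eqref{eq:deletion} entrywise by a bijection between paths. On the left we have length-$n$ paths. On the right we have pairs $(c,\pi')$ with $c\in C$ and $\pi'$ a length-$(n-|c|)$ path from $p$ to $q$. Note that $n-|c|\ge 2k-1-k=k-1\ge0$, so every exponent on the right is a genuine power.

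Given a path $\pi$ of length $n\ge 2k-1$ from $p$ to $q$, we locate a canonical complete petal traversal inside it. By the remark after Lemma~\ref{lem:unambiguous}, an internal vertex reaches $o$ within at most $k-1$ steps along its forced petal suffix. Hence $\pi$ first visits $o$ at some time $s\le k-1$; if $p=o$, then $s=0$. From that visit, at least $n-s\ge k$ steps remain.

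Leaving $o$, the path enters some petal. It must either complete that petal, returning to $o$ after $|c|\le k$ steps, or end inside it. Ending inside requires fewer than $|c|\le k$ remaining steps, which is impossible. So the segment right after the first visit to $o$ is a complete petal traversal of some codeword $c$. Deleting it yields a path $\pi'$ of length $n-|c|$ from $p$ to $q$, obtained by gluing at $o$. Record the pair $(c,\pi')$.

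Conversely, given $(c,\pi')$ with $\pi'$ of length $n-|c|\ge k-1$, we insert the petal of $c$ at the first visit of $\pi'$ to $o$. This requires two checks. First, $\pi'$ does visit $o$: its length is at least $k-1$, and a path avoiding $o$ has length at most $k-2$ (for $k\ge2$; for $k=1$ only $o$ exists). Second, after insertion, the first visit to $o$ of the new path is unchanged, so the forward map applied to it deletes exactly the inserted petal. The two maps are therefore mutually inverse.

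One subtlety is the case $p=o$, where the first visit is at time $0$; the argument is unchanged. Another is that distinct codewords give distinct petals, so $c$ is recovered from the petal. Different $c$ of equal length contribute separately, consistent with the sum over $c\in C$. Note that the argument uses no unique decipherability, matching the hypothesis on $C$.

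Multiplying~\eqref{eq:deletion} on the left by $x$ and on the right by $y$ gives~\eqref{eq:finite-recurrence} by linearity. The main obstacle is the precise bookkeeping of the first visit to $o$. We need $s\le k-1$, the remaining length at least $k$, and preservation of the first visit under insertion. The last point is easy to state but must be checked to confirm that the two maps really are inverse.
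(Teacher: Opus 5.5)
Your proposal is correct and follows the paper's proof. Both build the bijection by deleting the complete petal that begins at the first visit to $o$, which occurs within $k-1$ steps and leaves at least $k$ steps. Both invert it by inserting the recorded petal at the first visit of the shorter path, which exists because that path has length at least $k-1$. Your check that insertion leaves the first visit to $o$ unchanged is exactly what the paper's remark ``deletion and insertion are inverse'' leaves implicit. The symmetric check, that deletion does not change the prefix up to that visit, is equally immediate.
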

\begin{proof}
Fix endpoints $p,q$. A length-$n$ path reaches $o$ within its first $k-1$
edges, leaving at least $k$ edges in which to traverse a complete petal.
Deleting the first such petal gives a path $\gamma$ of length $n-|c|$ with
the same endpoints, where $c$ is the deleted codeword. We write
$\delta$ for the map sending the original path to $(c,\gamma)$.

Conversely, $\gamma$ has length at least $k-1$ and therefore visits $o$.
Inserting the petal for $c$ at its first visit recovers the original path. Visits at
either endpoint count; for $k=1$, the sole vertex is $o$ and the shorter
path may have length zero. Deletion and insertion are inverse. They preserve
both endpoints and count paths with their edge identities, regardless of
repeated labels. This proves~\eqref{eq:deletion}; multiplying by $x$ and $y$
gives~\eqref{eq:finite-recurrence}.
\end{proof}

Figure~\ref{fig:deletion} shows the deletion on a single path. Its prefix
$\pi$ ends at the first visit to $o$; its suffix $\tau$ begins after the
deleted petal. Recording $c$ makes this operation reversible.

\begin{figure}[ht]
\centering
\begin{tikzpicture}[every node/.style={font=\small}]
\draw[black!25,densely dotted] (0,-.35) -- (0,-2.05);
\draw[black!25,densely dotted] (8.2,-.35) -- (8.2,-2.05);
\fill[black!10,rounded corners=3pt] (2.25,-.55) rectangle (5.8,.6);
\node[proof state] (p) at (0,0) {$p$};
\node[proof state] (o1) at (2.5,0) {$o$};
\node[proof state] (o2) at (5.5,0) {$o$};
\node[proof state] (q) at (8.2,0) {$q$};
\draw[proof edge] (p) -- node[above] {$\pi$} node[below=2pt] {$|\pi|\le k-1$} (o1);
\draw[proof edge,line width=.85pt] (o1) -- node[above] {$c$} (o2);
\node at (4,-.40) {one codeword};
\draw[proof edge] (o2) -- node[above] {$\tau$} (q);
\draw[proof edge] (4,-.8) -- node[right] {record $c$ and delete} (4,-1.65);
\node[proof state] (pp) at (0,-2.4) {$p$};
\node[proof state] (oo) at (2.5,-2.4) {$o$};
\node[proof state] (qq) at (8.2,-2.4) {$q$};
\draw[proof edge] (pp) -- node[above] {$\pi$} (oo);
\draw[proof edge] (oo) -- node[above] {$\tau$} (qq);
\node[anchor=east] at (-.6,0) {length $n$};
\node[anchor=east] at (-.6,-2.4) {length $n-|c|$};
\end{tikzpicture}
\caption{Delete a petal, keep the endpoints. The two upper visits to $o$
are visits to the same vertex. The shorter path $\gamma=\pi\tau$ still has
weight $x_p y_q$; inserting the recorded petal at its first visit to $o$
recovers the original path. Either remaining segment may be empty.}
\label{fig:deletion}
\end{figure}

\Needspace{9\baselineskip}
\begin{proposition}\label{prop:deficient}
If $C$ is incomplete, there is a word $v$ with
\begin{equation}\label{eq:short-deficiency}
|v|\le2k-1,\qquad \mass(T_v)<r.
\end{equation}
If $F_j=rd^j$ for all $0\le j\le2k-1$, then $C$ is complete.
\end{proposition}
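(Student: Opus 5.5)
We prove the contrapositive: if $\mass(T_v)=r$ for every word $v$ of length at most $2k-1$, then $C$ is complete. By Proposition~\ref{prop:mass}, $\mass(T_w)\le r$ for every $w$, so $F_n=\sum_{|w|=n}\mass(T_w)\le rd^n$ for all $n$. Equality $\mass(T_w)=r$ for all $|w|=n$ holds exactly when $F_n=rd^n$. The two statements of the proposition are therefore the same claim: if $F_j=rd^j$ for $0\le j\le 2k-1$, then $C$ is complete.

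The first step propagates the equality to all $n$ using Lemma~\ref{lem:deletion}. Let $G_n=F_n-rd^n\le0$. We have $G_j=0$ for $j\le2k-1$, and~\eqref{eq:finite-recurrence} gives $F_n=\sum_{c}F_{n-|c|}$ for $n\ge2k-1$. To apply this to $G$, we also need $d^n=\sum_c d^{n-|c|}$, that is, the Kraft equality $\sum_c d^{-|c|}=1$. I would extract this from the data. Taking $n=2k-1$, with all indices $n-|c|\in[k-1,2k-2]$ in the range where $F=rd^j$, gives $rd^{2k-1}=\sum_c rd^{2k-1-|c|}$, hence $\sum_c d^{-|c|}=1$. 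Then $G_n=\sum_c G_{n-|c|}$ for $n\ge 2k-1$. By induction on $n$, with $G_{n-|c|}=0$ since $n-|c|<n$ and $n-|c|\ge k-1\ge0$, we get $G_n=0$ for all $n$. So every word $w$ satisfies $\mass(T_w)=r>0$, hence $T_w\ne0$.

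Next we show $w\in\Fac(C^*)$. Since $T_w=HB_wL_0\neq0$, we have $B_w\neq0$, and Lemma~\ref{lem:unambiguous} gives $w\in\Fac(C^*)$. As $w$ was arbitrary, $C$ is complete, which contradicts incompleteness. This gives the required $v$ with $|v|\le 2k-1$.

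The main obstacle is the boundary bookkeeping in the recurrence. We need to check that~\eqref{eq:finite-recurrence} already holds at $n=2k-1$, so that the Kraft identity can be read off from indices within $[0,2k-1]$; the deletion lemma states validity for $n\ge 2k-1$, so this works. The degenerate case $k=1$ should be checked separately: there $C\subseteq\Sigma$, the only vertex is $o$, and $r=1$, so the argument goes through with $F_n=d_C^n$ trivially. Alternatively, one could avoid deriving Kraft and cite the finite Kraft inequality from the appendix together with $G\le0$. Deriving it directly from $F_{2k-1}=rd^{2k-1}$ is cleaner and needs no appeal to the fact that $C$ is a code beyond Proposition~\ref{prop:mass}.
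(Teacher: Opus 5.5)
Your proof is correct. Its first half matches the paper's: both reduce the two assertions to ``$F_j=rd^j$ for $0\le j\le2k-1$ implies completeness'' and obtain Kraft equality from Lemma~\ref{lem:deletion} at $n=2k-1$. The routes then diverge.

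The paper stops at Kraft equality and invokes Sch\"utzenberger's criterion that a finite code with $\sum_{c\in C}d^{-|c|}=1$ is complete. Appendix~\ref{app:kraft} proves this criterion by a maximum-principle argument for $f(w)=(B_wh)(o)$, where $Ah=dh$.

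You use Kraft equality only to see that $rd^n$ satisfies the same recurrence as $F_n$, which propagates $F_n=rd^n$ to every $n$. The pointwise bound $\mass(T_w)\le r$ from Proposition~\ref{prop:mass} then plays the role of the maximum principle. Since the average of $\mass(T_w)$ over $\Sigma^n$ equals its maximum, every word has $\mass(T_w)=r$, so $B_w\neq0$. Your version is self-contained within Section~\ref{sec:code-bound}, needs no external theorem or appendix, and gives the stronger conclusion that every compressed matrix has full mass. The paper's version separates completeness from the compression. It therefore holds for any finite code satisfying Kraft equality, without a pure codeword $a^r$ or Proposition~\ref{prop:mass}, and it ties the finite certificate to the classical criterion.

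Two small points. Your closing alternative does not work as written. Appendix~\ref{app:kraft} proves that Kraft equality implies completeness; it does not give a Kraft inequality. Moreover, an inequality $\sum_c d^{-|c|}\le1$ together with $G\le0$ yields only $G_n\le\sum_c G_{n-|c|}$, which cannot force $G_n=0$. Second, the separate check for $k=1$ is unnecessary, since Lemma~\ref{lem:deletion} already covers that case.
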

\begin{proof}
If no such $v$ exists, every word of length at most $2k-1$ has mass
$r$, by Proposition~\ref{prop:mass}. Thus $F_j=rd^j$ throughout this range.
Lemma~\ref{lem:deletion}, applied at $n=2k-1$, gives
\[
rd^{2k-1}=\sum_{c\in C}rd^{2k-1-|c|}.
\]
Dividing by $rd^{2k-1}>0$ gives Kraft equality,
\begin{equation}\label{eq:kraft}
\sum_{c\in C}d^{-|c|}=1.
\end{equation}
For a finite code this implies completeness, a contradiction. This is
Sch\"utzenberger's classical criterion~\cite[Theorem 1]{neraud}; see also
\cite[Theorem 2.5.16]{codes-automata}. Appendix~\ref{app:kraft} proves the
implication directly, including the unary case. The same calculation proves
completeness whenever the equalities $F_j=rd^j$ hold, giving the second assertion.
\end{proof}

\FloatBarrier
\subsection{Padding and the length bound}\label{sec:quadratic}
\begin{proof}[Proof of Theorem~\ref{thm:quadratic}]
Fix $a\in\Sigma$. If no positive power of $a$ belongs to $C$,
Lemma~\ref{lem:pure} gives an uncompletable word of length $2k-1$.
The difference $(4k^2-3k)-(2k-1)=(k-1)(4k-1)$ is nonnegative.

Otherwise, let $a^r\in C$, where $1\le r\le k$. Choose $v$ as in
Proposition~\ref{prop:deficient}, and put
\[
T=T_v,\qquad s=\mass(T)\le r-1,\qquad p=k-1.
\]
Starting from $T$, apply~\eqref{eq:strict-descent} whenever the current mass
is positive. Since mass is a nonnegative integer, at most $s$ steps give
$i_1,\ldots,i_m\in\{0,\ldots,r-1\}$ satisfying
\begin{equation}\label{eq:killed-product}
TP^{p+i_1}T\cdots P^{p+i_m}T=0,\qquad m\le s\le r-1.
\end{equation}
If $s=0$, take $m=0$ and interpret the product as $T$. Form
\begin{equation}\label{eq:output-word}
w=a^pva^{p+i_1}v\cdots a^{p+i_m}va^p.
\end{equation}
Every power of $a$ in $w$ has exponent at least $k-1$. Applying
Lemma~\ref{lem:padding} to these factors and using the nonnegativity of the
matrices gives
\begin{align*}
B_w&\le L_0P^p(HB_vL_0)P^{p+i_1}(HB_vL_0)\cdots
 P^{p+i_m}(HB_vL_0)P^pH\\
&=L_0P^p\bigl(TP^{p+i_1}T\cdots P^{p+i_m}T\bigr)P^pH=0.
\end{align*}
Thus $w$ is uncompletable. Its length is quadratic because it contains at
most $r\le k$ copies of $v$, each of length at most $2k-1$, separated by
powers of $a$ with exponent at most $k+r-2$. More precisely, with
$q=m+1\le r$, the copies, intervening powers, and two end powers contribute
\begin{align}
|w|&=q|v|+\sum_{j=1}^m(k-1+i_j)+2(k-1)\notag\\
&\le r(2k-1)+(r-1)(k+r-2)+2(k-1)\label{eq:length-count}\\
&\le k(2k-1)+(k-1)(2k-2)+2(k-1)=4k^2-3k.\notag
\end{align}
The last inequality uses $1\le r\le k$.
\end{proof}

\FloatBarrier
\section{A polynomial-time construction}\label{sec:algorithm}
The proof requires two choices: a short word of deficient mass and a
sequence of rotations reducing that mass to zero. Both can be made by
finite averaging. For the first, matrix multiplication computes the total
mass over all completions of a prefix, allowing us to choose the next
letter without enumerating those completions. For the second, there are
only $r$ rotations to test at each stage.

\begin{lemma}[Prefix averaging]\label{lem:extraction}
Let $B_b$ be nonnegative matrices, let $x,y$ be compatible nonnegative row
and column vectors, and put $A=\sum_{b\in\Sigma}B_b$. If $xA^ny<rd^n$, a word $v\in\Sigma^n$ with $xB_vy<r$
can be found using at most $nd$ candidate tests, after computing
$z_j=A^jy$ for $0\le j\le n$.
\end{lemma}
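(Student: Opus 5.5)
The plan is the method of conditional expectations, choosing the letters of $v$ from left to right. For a prefix $u$ of length $j$, put
\[
G(u)=xB_uA^{n-j}y=\sum_{w\in\Sigma^{n-j}}xB_{uw}y,
\]
the total weight over all completions of $u$ to length $n$. This holds because $A^{n-j}=\sum_{w\in\Sigma^{n-j}}B_w$ by distributivity. Thus $G(\eps)=xA^ny<rd^n$ is the hypothesis, and for $|u|=n$ we have $G(u)=xB_uy$.

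I would maintain the invariant $G(u)<rd^{n-|u|}$. Suppose it holds for $u$ with $|u|=j<n$. Expanding $A=\sum_bB_b$ at the first position after $u$ gives
\[
\sum_{b\in\Sigma}G(ub)=xB_uAA^{n-j-1}y=G(u)<rd^{n-j}=d\cdot rd^{n-j-1}.
\]
Since there are $d$ letters, at least one $b$ satisfies $G(ub)<rd^{n-j-1}$, and we append it to $u$. After $n$ steps the invariant gives $xB_vy<r$.

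For the count, note that $G(ub)=(xB_uB_b)z_{n-j-1}$ with $z_i=A^iy$ precomputed. I would maintain the row vector $x_u=xB_u$ and, at each step, test at most $d$ candidates $b$ by evaluating $x_uB_bz_{n-j-1}$ and comparing with $rd^{n-j-1}$. The first success is accepted, and $x_{ub}=x_uB_b$ is updated. This uses at most $d$ tests per position, so at most $nd$ tests overall. The vectors $z_j$ come from the recursion $z_{j+1}=Az_j$, starting at $z_0=y$.

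The only point needing care is the exact identity $\sum_bG(ub)=G(u)$ together with the strict pigeonhole. This is immediate because all quantities are nonnegative integers or reals and the inequality is strict. If $d=0$ there are no letters, but then $rd^n$ makes sense only for $n=0$, where $v=\eps$ works directly. So no real obstacle arises beyond bookkeeping.
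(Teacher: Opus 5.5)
Your proposal is correct and follows essentially the same route as the paper: the paper also maintains the row vector $xB_p$ and evaluates the total completion weight as $uz_m$ with the precomputed $z_m=A^my$. It splits this weight by the next letter as $uz_m=\sum_b uB_bz_{m-1}$ and keeps a letter whose group stays below $rd^{m-1}$, testing at most $d$ letters per step. Your $d=0$ remark is worded loosely, but for $n\ge1$ the hypothesis $xA^ny<0$ cannot hold for nonnegative data, and $n=0$ is handled by the empty word, exactly as in the paper.
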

\begin{proof}
For a prefix $p$, put $u=xB_p$ and $m=n-|p|$. The quantity $uz_m$ is the
total weight of its $d^m$ completions, where a word $w$ has weight $xB_wy$.
The hypothesis says that this
sum is below $rd^m$ for the empty prefix. Whenever $m>0$, partitioning the
completions by their next letter gives
\begin{equation}\label{eq:prefix-invariant}
uz_m=\sum_{b\in\Sigma}uB_bz_{m-1}.
\end{equation}
If the total is below $rd^m$, one of these $d$ groups has total below
$rd^{m-1}$. Choosing its letter preserves the strict inequality for the
longer prefix. After $n$ choices the only remaining completion is the word
$v$ itself, so $xB_vy<r$. Each choice tests at most $d$ letters. For $n=0$,
the empty word already satisfies the conclusion.
\end{proof}

\begin{theorem}\label{thm:algorithm}
There is an algorithm which, given a finite nonempty code $C$ of nonempty
words and its finite alphabet $\Sigma$, decides completeness in polynomial
time in $L$ and $d$. If $C$ is incomplete, it returns an uncompletable word
of length at most $4k^2-3k$. The alphabet is represented explicitly.
\end{theorem}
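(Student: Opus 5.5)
The algorithm follows the proof of Theorem~\ref{thm:quadratic} step by step, replacing each existence claim by an explicit search of polynomial cost.

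\textbf{Setup.} Build the flower automaton with $|Q|=L-|C|+1\le L$ vertices and its matrices $B_b$ for $b\in\Sigma$. Pick any letter $a$ occurring in $C$ and check whether some $a^r\in C$, which is a linear scan.

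\textbf{No pure power.} If there is no such $r$, output $a^{2k-1}$; it is uncompletable by Lemma~\ref{lem:pure}.

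\textbf{Computing the weights.} Otherwise compute $\alpha,\beta$ by walking each petal, which gives $L_0$, $H$, $x=\one^{\mathsf T}H$ and $y=L_0\one$. Compute $z_j=A^jy$ for $0\le j\le 2k-1$ and the values $F_j=xz_j$. The entries of $z_j$ are bounded by $d^j$ times a polynomial, so they have $O(k\log d+\log L)$ bits. Each product costs $O(L^2)$ arithmetic operations, so the total cost is polynomial in $L$ and $d$.

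\textbf{Deciding completeness.} If $F_j=rd^j$ for all $j\le 2k-1$, report ``complete''; this is justified by the second assertion of Proposition~\ref{prop:deficient}. Otherwise, Proposition~\ref{prop:mass} gives $F_j\le rd^j$ termwise, so some $j\le 2k-1$ has $F_j<rd^j$. In particular, if $C$ is incomplete this case occurs, again by Proposition~\ref{prop:deficient}, so the test is an exact decision procedure.

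\textbf{Finding a deficient word.} Apply Lemma~\ref{lem:extraction} with $n=j$. It uses at most $jd$ tests, each a vector–matrix–vector product of polynomial size, and yields $v$ with $|v|\le 2k-1$ and $\mass(T_v)=xB_vy<r$. Then form $T=T_v=HB_vL_0$ explicitly.

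\textbf{Greedy rotations.} Starting from $M=T$, while $\mass(M)>0$, test the $r$ rotations $MP^{p+i}T$ for $i=0,\dots,r-1$ and take one of smallest mass. By \eqref{eq:strict-descent} the chosen rotation has mass strictly below $\mass(M)$. So there are at most $s\le r-1$ rounds, each costing $r$ products of $r\times r$ binary matrices (these products stay binary by Lemma~\ref{lem:semigroup}).

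\textbf{Output.} Return the word \eqref{eq:output-word}. The argument of Theorem~\ref{thm:quadratic}, via Lemma~\ref{lem:padding}, shows that it is uncompletable and has length at most $4k^2-3k$.

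\textbf{Main obstacle.} The delicate part is not correctness but the complexity accounting. First, $d^{2k-1}$ is exponential in $k$, so $d^j$ and the entries of $z_j$ must be handled as binary integers. Their bit length is $O(k\log d+\log L)$, which is polynomial because $k\le L$, so arithmetic on them is polynomial. Second, the comparisons $F_j$ versus $rd^j$ and the group totals $uB_bz_{m-1}$ versus $rd^{m-1}$ in Lemma~\ref{lem:extraction} must be exact integer comparisons. Third, the output length, the number $jd\le 2kd$ of candidate tests, and the at most $r^2\le k^2$ rotation tests must all be counted.

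Summing these, the running time is roughly $O(k\cdot L^2\cdot d\cdot \mathrm{poly}(k\log d+\log L))$, which is polynomial in $L$ and $d$ as claimed.
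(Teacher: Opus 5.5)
Your proposal is correct and follows essentially the same route as the paper's proof. It uses the pure-power shortcut, the values $F_j=xz_j$ with $z_j=A^jy$ for $0\le j\le 2k-1$ as an exact completeness test, extraction of a deficient word via Lemma~\ref{lem:extraction}, the rotation descent on $T_v$ with $p=k-1$, and the same bit-length and operation-count accounting; your minor deviations are harmless. These are choosing $a$ among the letters of $C$, and taking a minimum-mass rotation instead of the first one that lowers the mass.
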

\begin{proof}
After computing $k$, the algorithm fixes a letter $a\in\Sigma$. If $C$
contains no power of $a$, Lemma~\ref{lem:pure} supplies the output
$a^{2k-1}$. Otherwise the unique codeword $a^r$ determines the matrices
$B_b,L_0,H,P$. The algorithm computes
\[
z_j=A^jy,\qquad F_j=xz_j,\qquad rd^j\qquad(0\le j\le2k-1),
\]
using $z_0=y$ and $z_{j+1}=Az_j$.
By Proposition~\ref{prop:mass}, $F_j\le rd^j$. If $F_j=rd^j$ at every tested length,
Proposition~\ref{prop:deficient} implies that $C$ is complete. If one of
the inequalities is strict, Lemma~\ref{lem:extraction} supplies a deficient
word $v$ of that length.

The remaining computation follows the descent in
Subsection~\ref{sec:quadratic}. Starting with $T=T_v$ and $M=T$, the
algorithm tests the rotations $i=0,\ldots,r-1$ until it finds one satisfying
\[
\mass(MP^{k-1+i}T)<\mass(M).
\]
Such a rotation exists whenever $\mass(M)>0$, by~\eqref{eq:strict-descent}.
Repeating this replacement reduces the mass to zero in at most $r-1$
rounds. The recorded rotations determine the
word~\eqref{eq:output-word}, which is uncompletable and has the required
length by Subsection~\ref{sec:quadratic}. A fixed order on the alphabet and
rotations makes the algorithm deterministic.

For the running time, write $N=|Q|\le L$. The flower graph and compression
data are built by finite scans of the codewords. Since each column of $H$
and each row of $L_0$ has at most one nonzero entry, $x,y$ are binary.
Unambiguity gives
\[
(A^j)_{pq}\le d^j,\qquad (z_j)_p\le Nd^j,\qquad (xB_p)_q\le N.
\]
Thus the stored vectors, thresholds, candidate sums and intermediate products
use
\begin{equation}\label{eq:bits}
O\bigl(\log(L+1)+k\log(d+1)\bigr)
\end{equation}
bits per integer; polynomial factors in $N$ add only $O(\log(L+1))$ bits.
Compressed descent products are binary, and their intermediate sums have
polynomial magnitude in $r$.

Dense arithmetic computes the vectors $z_j$ in $O(kN^2)$ operations.
Prefix extraction tests at most $(2k-1)d$ letters, at $O(N^2)$ operations
per test. Propagating the $r$ rows of $H$ along $v$ and multiplying by $L_0$
computes $T$. Descent tests at most $r(r-1)$ rotations, taking $O(r^5)$
operations in total; writing the output takes $O(k^2)$ symbol operations.
Together with~\eqref{eq:bits} and $r\le k\le L$, these bounds prove
polynomial bit complexity.
\end{proof}
The algorithm assumes unique decipherability. On the complete branch, the
equalities $F_j=rd^j$ for $0\le j\le2k-1$ form a finite certificate:
Lemma~\ref{lem:deletion} turns them into Kraft equality. The compressed
semigroup is never enumerated.

\paragraph{Example.}
Consider $C=\{00,01,11,001\}$, which is neither a prefix code nor a suffix
code. It is uniquely decipherable: cancelling comparable codeword prefixes
leaves only the residual $1$, and further cancellation against a codeword
again leaves $1$, never the empty word. With $a=0$ and $r=2$, the algorithm
finds $F_0=2$, $F_1=4$, and $F_2=7<8$. Figure~\ref{fig:example-flower}
shows the flower automaton and how the deficient word $v=11$ leads to a
zero product on the two cycle coordinates.

\begin{figure}[ht]
\centering
\begin{minipage}[c]{.38\linewidth}
\centering
\begin{tikzpicture}[every node/.style={font=\small},x=1.2cm]
\node[proof state,fill=black!15] (o) at (0,0) {$o$};
\node[proof state,fill=black!15] (z) at (0,1.65) {$1$};
\node[proof state] (t) at (1.7,0) {};
\node[proof state] (s) at (0,-1.65) {};
\node[proof state] (u) at (-1.65,-.75) {};
\node[proof state] (v) at (-1.65,.75) {};
\draw[proof edge] (o) to[bend left=28] node[left] {$0$} (z);
\draw[proof edge] (z) to[bend left=28] node[right] {$0$} (o);
\draw[proof edge] (o) to[bend left=28] node[above] {$0$} (t);
\draw[proof edge] (t) to[bend left=28] node[below] {$1$} (o);
\draw[proof edge] (o) to[bend left=28] node[right] {$1$} (s);
\draw[proof edge] (s) to[bend left=28] node[left] {$1$} (o);
\draw[proof edge] (o) -- node[below,pos=.7] {$0$} (u);
\draw[proof edge] (u) -- node[left] {$0$} (v);
\draw[proof edge] (v) -- node[above,pos=.3] {$1$} (o);
\end{tikzpicture}
\end{minipage}\hfill
\begin{minipage}[c]{.60\linewidth}
\centering
\begin{tikzpicture}[every node/.style={font=\small},
 active/.style={proof state,fill=black!15},
 inactive/.style={proof state,proof inactive}]
\node[active] (a0) at (0,.55) {$0$};
\node[active] (a1) at (0,-.55) {$1$};
\node[active] (b0) at (2,.55) {$0$};
\node[inactive] (b1) at (2,-.55) {$1$};
\node[inactive] (c0) at (4,.55) {$0$};
\node[active] (c1) at (4,-.55) {$1$};
\node[inactive] (d0) at (6,.55) {$0$};
\node[inactive] (d1) at (6,-.55) {$1$};
\node at (1,1.2) {$T_{11}$};
\node at (3,1.2) {$P$};
\node at (5,1.2) {$T_{11}$};
\draw[proof edge] (a0) -- (b0);
\draw[proof edge] (b0) -- (c1);
\draw[proof edge,black!55,dashed] (b1) -- (c0);
\draw[proof edge,black!55,dashed] (c0) -- (d0);
\draw[dashed] (a1) -- (1,-.55);
\node at (1,-.55) {$\times$};
\draw[dashed] (c1) -- (5,-.55);
\node at (5,-.55) {$\times$};
\node at (0,-1.2) {$\{0,1\}$};
\node at (2,-1.2) {$\{0\}$};
\node at (4,-1.2) {$\{1\}$};
\node at (6,-1.2) {$\varnothing$};
\end{tikzpicture}
\[
T_{11}=\begin{pmatrix}1&0\\0&0\end{pmatrix},\qquad
P=\begin{pmatrix}0&1\\1&0\end{pmatrix}.
\]
\end{minipage}
\caption{Rotate, then remove the last path. Left: the flower of
$\{00,01,11,001\}$, with the $00$ cycle shaded and $o$ at cycle coordinate $0$.
Right: start at either cycle coordinate and follow $T_{11}$, then $P$, then
$T_{11}$. Shading marks the reachable coordinates; a cross marks a failed
continuation. The first factor leaves one path, the rotation moves it to
coordinate $1$, and the last factor kills it. Thus $\mass(T_{11})=1<2$
and $T_{11}PT_{11}=0$.}
\label{fig:example-flower}
\end{figure}

Thus $p=k-1=2$ and $i_1=1$ give the output $00\,11\,000\,11\,00$, of
length $11$. The algorithm need not find a shortest word: here $U(C)=5$,
since $10110$ is uncompletable and every shorter binary word occurs in
$11\cdot11\cdot00\cdot001\cdot00\cdot11\cdot01\cdot01\cdot11$.
Further examples, including a code with infinite deciphering delay, are
included with the formalization.
Appendix~\ref{app:pilot} reports a small implementation study.

\FloatBarrier
\section{Matrix mortality}\label{sec:mortality}
The code bound also applies to automata whose cycles share internal
vertices. If every cycle in a strongly connected component passes through
one vertex, the first-return paths at that vertex have bounded length.
Under unambiguity, their labels form a finite code.

\begin{lemma}[First-return code]\label{lem:first-return}
Let $D$ be a strongly connected labelled graph with $m$ vertices and a cycle
hub $q$. Suppose there is at most one path with any prescribed label and
endpoints. If $D$ contains a cycle, the labels $X$ of first-return paths at
$q$ form a finite nonempty code, with maximum word length $k_D\le m$, and
\begin{equation}\label{eq:return-factors}
\Fac(X^*)=\{w:w\text{ labels a path in }D\}.
\end{equation}
\end{lemma}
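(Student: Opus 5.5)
The plan is to work directly with first-return paths at $q$, meaning nonempty paths from $q$ to $q$ that do not visit $q$ at an interior vertex. I will establish four things in turn: that these paths have bounded length, that $X$ is nonempty, that $X$ is a code, and the factor identity~\eqref{eq:return-factors}.

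\emph{Length bound.} Removing $q$ leaves an acyclic graph by the hub hypothesis. The interior vertices of a first-return path therefore form a path in an acyclic graph on at most $m-1$ vertices. Such a path repeats no vertex, so it has at most $m-1$ interior vertices, and the first-return path has length at most $m$. Since the alphabet is finite and the lengths are bounded, $X$ is finite and $k_D\le m$.

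\emph{Nonemptiness.} Some cycle exists in $D$, and it must pass through $q$, because $D-q$ is acyclic. Cutting this cycle at its first return to $q$ gives a first-return path, so $X\neq\varnothing$. Every label is nonempty, since each first-return path has length at least one.

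\emph{Unique decipherability.} Suppose $c_1\cdots c_s=d_1\cdots d_t$ with $c_i,d_j\in X$. Choose first-return paths labelled by the $c_i$ and concatenate them, and do the same for the $d_j$. This gives two closed paths at $q$ with the same label. The unambiguity hypothesis says there is at most one path with a given label and endpoints, so the two paths coincide. A closed path at $q$ is split into first returns by its visits to $q$, and this splitting is unique. Hence $s=t$ and $c_i=d_i$ for every $i$. The same argument applied to a single codeword shows that distinct first-return paths carry distinct labels, though this is not needed for the statement.

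\emph{Factor identity.} For the inclusion $\subseteq$, a factor of a word in $X^*$ is the label of a subpath of a concatenation of first-return paths, and such a subpath is a path in $D$. For the inclusion $\supseteq$, take a path $\pi$ in $D$ from $u$ to $v$. By strong connectivity there are paths from $q$ to $u$ and from $v$ to $q$; prepend and append them to obtain a closed path at $q$. Its label lies in $X^*$, obtained by cutting at the visits to $q$, and the label of $\pi$ occurs in it as a contiguous factor. The empty word belongs to both sides. The only case needing care is $m=1$ with a loop, where everything is immediate.

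I expect no real obstacle here. The one delicate point is the length bound, where one must be sure that a first-return path cannot repeat an interior vertex. This follows because any repetition would produce a cycle avoiding $q$, which contradicts the hub hypothesis. That is exactly the role the cycle-hub assumption plays.
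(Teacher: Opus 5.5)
Your proof is correct and follows the paper's argument essentially step for step. The hub hypothesis forbids repeated interior vertices, which bounds lengths by $m$. Unambiguity forces the two closed walks to coincide, so their splittings at visits to $q$ agree. Strong connectivity extends any path to a closed walk at $q$. The only cosmetic difference is that you get finiteness of $X$ from bounded length over a finite alphabet, whereas the paper counts the finitely many first-return paths directly; both are fine here.
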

\begin{proof}
A first return cannot repeat an internal vertex: the repeated part would
contain a cycle avoiding $q$. It therefore has length at most $m$.
There are finitely many such paths, and at least one since $D$ is strongly
connected and contains a cycle.

The successive visits of a closed walk to $q$ determine its decomposition
into first-return paths. Two factorizations of a word over $X$ give closed
walks with the same label, which coincide by unambiguity. Their first-return
decompositions therefore coincide as well, proving that $X$ is a code.
A factor of a closed-walk label is the label of a subpath.
Conversely, any path extends at both ends to a closed walk at $q$, by strong
connectivity. This proves~\eqref{eq:return-factors}.
\end{proof}

The correspondence between paths and codewords is summarized below.
Here $\mathcal R_q$ is the set of first-return paths and $\mathcal L_q$ the
set of closed walks, including the empty walk. All four arrows are bijections;
$\mathrm{List}(X)$ distinguishes codeword sequences from their concatenated words.
\[
\begin{tikzcd}[row sep=large,column sep=large]
\mathrm{List}(\mathcal R_q) \arrow[r,"\text{concatenate}"]
  \arrow[d,"\text{labels}"'] & \mathcal L_q \arrow[d,"\text{label}"] \\
\mathrm{List}(X) \arrow[r,"\text{concatenate}"'] & X^*
\end{tikzcd}
\]

\begin{proof}[Upper bound in Theorem~\ref{thm:mortality}]
Interpret $M_a(p,q)$ as the number of parallel $a$-edges from $p$ to $q$.
If $p,q$ lie in the same component and $M_u(p,q)\ge2$, choose a return
path labelled $v$ from $q$ to $p$. Then
\[
M_{(uv)^t}(p,p)\ge2^t.
\]
Since $|uv|>0$, this contradicts joint spectral radius at most one.
Thus path counts within each component are at most one. The same argument
also shows directly why a finite generated monoid suffices.

In a cyclic component $D$, apply Lemma~\ref{lem:first-return}. A global
killing word labels no path in $D$, so its first-return code $X_D$ is
incomplete. Theorem~\ref{thm:quadratic} gives a word $w_D$ killing $D$ with
$|w_D|\le4k_D^2-3k_D\le4|D|^2-3|D|$.
An acyclic component consists of one vertex without a loop; any letter
kills its internal automaton and has length $1=4\cdot1^2-3\cdot1$.

Let $D_1,\ldots,D_s$ be a topological ordering of the components, and put
$w=w_{D_1}\cdots w_{D_s}$. We claim that any path labelled by the first
$i$ blocks ends in a component $D_j$ with $j>i$. Inductively, the segment
labelled $w_{D_i}$ starts in a component with index at least $i$. If it
ended in $D_i$, the topological order would force it to remain in $D_i$
throughout, contradicting the choice of $w_{D_i}$. For $i=s$ there is no
possible endpoint, so $M_w=0$.
Writing $n_i=|D_i|$ gives
\[
|w|\le\sum_i(4n_i^2-3n_i)
=4\sum_i n_i^2-3n\le4n^2-3n.
\]
The sharper componentwise bound is
\begin{equation}\label{eq:componentwise}
|w|\le\sum_{D\text{ cyclic}}(4k_D^2-3k_D)
       +\#\{D:D\text{ acyclic}\}.
\end{equation}
\end{proof}

This reduction permits ambiguity between different components. It uses
unambiguity only inside a component, where a return path would otherwise
amplify the number of paths exponentially. For compact matrix input the
first-return code can be exponentially large; the polynomial running-time
claim in Section~\ref{sec:algorithm} concerns an explicitly listed code.

\section{Matching quadratic lower bounds}\label{sec:lower}
For the lower bound, we use codes obtained by deleting one unbordered word
from all words of a fixed length~\cite{pribavkina}. To be uncompletable, a
word must contain the deleted block in every possible alignment. These
occurrences cannot overlap, which forces quadratic length. A realization
with linearly many states gives the same order of growth for matrix mortality.
Fix $k\ge2$, put $u=a^{k-1}b$, and let
\[
X_k=\{a,b\}^k\setminus\{u\}.
\]
This is a code because all its words have the same length.

\begin{proposition}\label{prop:lower}
The code $X_k$ has $U(X_k)=k^2+k-1$. It is the first-return code of a
strongly connected partial deterministic automaton with $2k-1$ states and
a cycle hub. The shortest zero product of its two transition matrices
therefore has length $k^2+k-1$.
\end{proposition}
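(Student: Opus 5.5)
The plan is to characterize the uncompletable words of $X_k$ by block alignments, and then to count how much room the forced occurrences of $u$ need. Since $X_k$ consists of words of length $k$, a factorization of a word of $C^*$ cuts it into blocks at positions congruent to a fixed offset modulo $k$. So a word $w$ occurs in $X_k^*$ exactly when, for some offset $j\in\{0,\ldots,k-1\}$, $w$ can be embedded at offset $j$ in a concatenation of length-$k$ blocks all different from $u$.

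The first step is to show that only the blocks lying entirely inside $w$ matter. A partial block at either end fixes fewer than $k$ letters of its length-$k$ block. It therefore has at least two completions to a word of $\{a,b\}^k$, and at most one of them equals $u$. Hence $w\notin\Fac(X_k^*)$ if and only if, for every residue $j$ modulo $k$, the word $w$ contains an occurrence of $u$ starting at a position congruent to $j$. This criterion is the crux of the whole argument.

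For the lower bound I then use that $u=a^{k-1}b$ is unbordered: its only occurrence of $b$ is the last letter, so two occurrences of $u$ cannot overlap. Let $p_1<\cdots<p_k$ be starting positions of occurrences in $w$ with pairwise distinct residues modulo $k$. Non-overlap gives $p_{i+1}-p_i\ge k$. Equality would give equal residues, so in fact $p_{i+1}-p_i\ge k+1$. Therefore
\[
|w|\ge p_k-p_1+k\ge(k-1)(k+1)+k=k^2+k-1.
\]
For the matching upper bound I take the word $w=(ua)^{k-1}u$. It has length $(k-1)(k+1)+k$, and its occurrences of $u$ start at positions $i(k+1)\equiv i\pmod k$ for $i=0,\ldots,k-1$. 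These cover every residue, so $w$ is uncompletable by the criterion, and $U(X_k)=k^2+k-1$.

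For the automaton, the states are the hub $o$, the states $s_i$ for $1\le i\le k-1$ (meaning ``read $a^i$ in the current block''), and the states $t_i$ for $1\le i\le k-1$ (meaning ``read $i$ letters of the current block, not all equal to $a$''). That makes $2k-1$ states. The transitions are as follows:
\[
o\xrightarrow{a}s_1,\qquad o\xrightarrow{b}t_1,\qquad s_i\xrightarrow{a}s_{i+1},\qquad s_i\xrightarrow{b}t_{i+1}\quad(i<k-1),\qquad s_{k-1}\xrightarrow{a}o,
\]
\[
t_i\xrightarrow{a,b}t_{i+1}\quad(i<k-1),\qquad t_{k-1}\xrightarrow{a,b}o.
\]
The transition $s_{k-1}\xrightarrow{b}$ is left undefined, which is exactly what removes $u$.

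This automaton is partial deterministic, hence unambiguous. It is strongly connected. Every edge advances the position within the block, so every cycle passes through $o$, making $o$ a cycle hub. The first-return paths at $o$ have length exactly $k$, and their labels are precisely $\{a,b\}^k\setminus\{u\}=X_k$. By Lemma~\ref{lem:first-return}, the words labelling paths are exactly $\Fac(X_k^*)$, and the $0/1$ transition matrices satisfy $M_w=0$ if and only if $w$ labels no path. The shortest zero product therefore has length $U(X_k)=k^2+k-1$.

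I expect the main obstacle to be stating the boundary-completion step carefully. It has to handle a word shorter than $k$, which lies inside a single block, and a block that is cut at both ends. In every such case at least one letter of the block is free, so at least two completions exist and one of them avoids $u$.
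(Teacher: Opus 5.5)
Your proof is correct and follows the paper's argument step for step: the alignment criterion with a free letter in each cut end block, the unbordered-$u$ spacing bound $p_{i+1}-p_i\ge k+1$, the witness $(ua)^{k-1}u$, and the same $(2k-1)$-state automaton with the missing $b$-transition at the last all-$a$ state, read through Lemma~\ref{lem:first-return}. The paper adds one remark you omit, that the partial deterministic matrices generate a finite monoid; the proposition does not need it, but it places the example within the hypotheses of Theorem~\ref{thm:mortality}.
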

\begin{proof}
An occurrence of $z$ in a concatenation from $X_k$ places the codeword
boundaries in one of $k$ alignments relative to $z$. An occurrence of $u$
starting at position $j$ excludes the alignment whose blocks start at
positions congruent to $j$ modulo $k$. Conversely, if an alignment contains
no full block $u$ inside $z$, every block wholly contained in $z$ is allowed.
The partially covered end blocks can also be completed to allowed words:
each has an unspecified letter that can be chosen to differ from $u$.
Consequently,
\[
z\notin\Fac(X_k^*)\quad\Longleftrightarrow\quad
\text{starts of occurrences of }u\text{ cover all residues modulo }k.
\]
Now suppose $z$ is uncompletable and choose one occurrence of $u$ for each
residue class. Two occurrences cannot overlap, since the final $b$ of the
first would lie among the initial $a$'s of the second. In increasing order,
the chosen starts are therefore at least $k$ positions apart. A difference
of $k$ would give the same residue, so each difference is at least $k+1$.
The span of these $k$ occurrences, including the final copy of $u$, gives
\[
|z|\ge(k-1)(k+1)+k=k^2+k-1.
\]
The word $W_k=(ua)^{k-1}u$ attains this bound: its displayed copies of $u$
start at $0,k+1,\ldots,(k-1)(k+1)$, covering all residues.

\Needspace{12\baselineskip}
An automaton with first-return code $X_k$ has states
$q_0,m_1,\ldots,m_{k-1},d_1,\ldots,d_{k-1}$. After $i$ letters of a block,
the state is $m_i$ if all letters read so far were $a$, and $d_i$ otherwise.
Its transitions are given in Figure~\ref{fig:lower-automaton}.

\begin{figure}[ht]
\centering
\begin{minipage}[c]{.45\linewidth}
\centering
\begin{tabular}{lll}
\toprule
State & $a$ & $b$\\
\midrule
$q_0$ & $m_1$ & $d_1$\\
$m_i\ (i<k-1)$ & $m_{i+1}$ & $d_{i+1}$\\
$d_i\ (i<k-1)$ & $d_{i+1}$ & $d_{i+1}$\\
$m_{k-1}$ & $q_0$ & undefined\\
$d_{k-1}$ & $q_0$ & $q_0$\\
\bottomrule
\end{tabular}
\end{minipage}\hfill
\begin{minipage}[c]{.53\linewidth}
\centering
\begin{tikzpicture}[every node/.style={font=\small},
 vertex/.style={proof state,minimum size=8mm}]
\node[vertex,fill=black!15] (q) at (0,0) {$q_0$};
\node[vertex] (m1) at (1.9,.8) {$m_1$};
\node[vertex] (m2) at (3.8,.8) {$m_2$};
\node[vertex] (d1) at (1.9,-.8) {$d_1$};
\node[vertex] (d2) at (3.8,-.8) {$d_2$};
\draw[proof edge] (q) -- node[above left] {$a$} (m1);
\draw[proof edge] (q) -- node[below left] {$b$} (d1);
\draw[proof edge] (m1) -- node[above] {$a$} (m2);
\draw[proof edge] (m1) -- node[above right] {$b$} (d2);
\draw[proof edge] (d1) -- node[below] {$a,b$} (d2);
\draw[proof edge,rounded corners=5pt] (m2.north) -- (3.8,1.8)
 -- node[above] {$a$} (0,1.8) -- (q.north);
\draw[proof edge,rounded corners=5pt] (d2.south) -- (3.8,-1.8)
 -- node[below] {$a,b$} (0,-1.8) -- (q.south);
\draw[dashed] (m2.east) -- node[above] {$b$} (5,.8);
\node at (5,.8) {$\times$};
\end{tikzpicture}
\end{minipage}
\caption{The lower-bound automaton: transitions for general $k$ (left) and
the case $k=3$ (right). The missing $b$-transition at $m_2$ excludes $aab$.
Every directed cycle passes through $q_0$.}
\label{fig:lower-automaton}
\end{figure}

A path starting at $q_0$ returns there after $k$ letters unless its label
is $a^{k-1}b$, in which case the last transition is undefined. Thus its
first-return code is $X_k$. Every state is reachable from $q_0$ and can
reach $q_0$, so the automaton is strongly connected. After removing $q_0$,
each transition increases the state index; hence the remaining graph is
acyclic. The transition matrices generate a finite monoid because their
products represent partial functions on the state set. By
\eqref{eq:return-factors}, a word kills the automaton if and only if it is
uncompletable for $X_k$, so the shortest zero product has length $k^2+k-1$.
\end{proof}

In terms of the number of states $n=2k-1$, this minimum is
$(n^2+4n-1)/4$. For even $n\ge4$, adding one isolated state to the construction
with $n-1$ states leaves the nonempty killing words unchanged and gives a
minimum of at least $n^2/4$. Together with Theorem~\ref{thm:mortality}, these
examples show that the largest shortest-zero-product length in this class
is $\Theta(n^2)$. We do not determine the optimal leading constant.
For codes, the same family gives
\[
k^2+k-1\le R_{\rm code}(k)\le4k^2-3k,
\]
where $R_{\rm code}(k)$ is the largest shortest-uncompletable-word length
among finite incomplete codes with maximum word length at most $k$.

\paragraph{Acknowledgements.}
We thank \href{mailto:manas@perseus.so}{Manas Ravulapalli} for his guidance.

\appendix
\section{An averaging bound for repeated products}\label{sec:quantitative}
The main proof uses the fact that positive integer mass decreases at each
step. Keeping track of the amount of decrease gives another bound on the
number of repetitions. This estimate does not require unambiguity or a
finite matrix semigroup.

\begin{theorem}\label{thm:contraction}
Let $M,T$ be square matrices over $\N$, and $(R_i)_{i\in I}$ a finite
nonempty family of matrices of the same size such that $\sum_{i\in I}R_i=J$.
Write $h=|I|$ and $s=\mass(T)$. For every $n\ge0$ there exist
$i_1,\ldots,i_n\in I$ such that
\begin{equation}\label{eq:contraction}
h^n\mass(MR_{i_1}T\cdots R_{i_n}T)\le\mass(M)s^n.
\end{equation}
In particular, with $M=T$, the condition $s^{n+1}<h^n$ guarantees a zero
product with exactly $n$ inserted family members.
\end{theorem}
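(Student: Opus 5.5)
The plan is induction on $n$, choosing the indices greedily one at a time and using Lemma~\ref{lem:average} at each step. For $n=0$ the product is $M$, and \eqref{eq:contraction} reads $\mass(M)\le\mass(M)$.

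For the inductive step, suppose $i_1,\ldots,i_n$ have been chosen, and write $M_n=MR_{i_1}T\cdots R_{i_n}T$. These are square matrices over $\N$ of the common size, since the $R_i$ are nonnegative because they are matrices over $\N$ like $M,T$. Lemma~\ref{lem:average} applied to $M_n$, $T$ and the family $(R_i)_{i\in I}$ with $\sum_iR_i=J$ gives
\[
\sum_{i\in I}\mass(M_nR_iT)=\mass(M_n)\,s.
\]
A sum of $h$ nonnegative reals has a term at most its average, so some $i_{n+1}\in I$ satisfies
\[
h\,\mass(M_nR_{i_{n+1}}T)\le\mass(M_n)\,s.
\]
Multiplying by $h^n$ and using the induction hypothesis $h^n\mass(M_n)\le\mass(M)s^n$ then yields \eqref{eq:contraction} for $n+1$. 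Lemma~\ref{lem:average} was stated with a list $R_1,\ldots,R_m$; a finite nonempty index set $I$ is just a relabelling, so the lemma applies as stated.

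For the final assertion, take $M=T$, so $\mass(M)=s$ and the bound becomes $h^n\mass(\cdot)\le s^{n+1}<h^n$. The mass is therefore a nonnegative integer strictly less than $1$, hence $0$, and the product $TR_{i_1}T\cdots R_{i_n}T$ is zero, since zero mass means the zero matrix for nonnegative matrices.

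The argument uses no unambiguity or finiteness, only nonnegativity and the averaging identity. The one step needing care is the averaging choice, which needs $h\ge1$, and that holds because $I$ is nonempty.
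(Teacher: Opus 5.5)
Your proof is correct and is essentially the paper's argument: at each step you pick an index whose term is at most the average in Lemma~\ref{lem:average}, iterate (using $s\ge0$ when multiplying through), and for $M=T$ conclude that an integer mass strictly below $1$ must vanish. You read the $R_i$ as matrices over $\N$; the paper reads them the same way, and that is exactly what the integrality step in the final assertion needs.
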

\begin{proof}
By Lemma~\ref{lem:average}, some $i\in I$ satisfies
$h\mass(MR_iT)\le\mass(M)s$; otherwise the sum of the $h$ masses would
exceed $\mass(M)s$. Iterate this inequality $n$ times, replacing $M$ by the
current product at each step. This proves~\eqref{eq:contraction}, including
$n=0$. With $M=T$, a nonzero final product has integer mass at least one,
which would imply $h^n\le s^{n+1}$. The strict inequality excludes it.
\end{proof}
The family size $h$ need not equal the matrix dimension. Equivalently,
the $h^n$ products have total mass $\mass(M)s^n$; if this is less than
$h^n$, at least one product is zero.

\begin{corollary}\label{cor:quantitative}
Let $C$ be a finite code over an arbitrary alphabet, with codeword lengths
between $1$ and $k$. Suppose $a^r\in C$ with $r\ge1$. For any word $v$, put
$s=\mass(T_v)$. If $n\ge0$ satisfies
\begin{equation}\label{eq:power-criterion}
s^{n+1}<r^n,
\end{equation}
there is an uncompletable word of length at most
\begin{equation}\label{eq:quantitative-length}
(n+1)|v|+(n+2)(k-1)+n(r-1).
\end{equation}
\end{corollary}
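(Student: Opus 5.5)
We combine Theorem~\ref{thm:contraction} with the padding construction used in the proof of Theorem~\ref{thm:quadratic}.

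First, apply Theorem~\ref{thm:contraction} with $M=T=T_v$ and the family $R_i=P^{p+i}$ for $i\in\{0,\ldots,r-1\}$, where $p=k-1$. By~\eqref{eq:rotations} these $h=r$ matrices sum to $J_r$. All matrices are $r\times r$ over $\N$, and Lemma~\ref{lem:semigroup} shows $T_v$ is binary. Since $s^{n+1}<r^n$, the "in particular" clause yields indices $i_1,\ldots,i_n\in\{0,\ldots,r-1\}$ with
\[
T_vP^{p+i_1}T_v\cdots P^{p+i_n}T_v=0 .
\]
The degenerate case $n=0$ is covered by the same clause: the condition reads $s<1$, so $T_v=0$ and the product is $T_v$ itself.

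Second, form
\[
w=a^pva^{p+i_1}v\cdots a^{p+i_n}va^p ,
\]
exactly as in~\eqref{eq:output-word}. Every block of $a$'s has exponent at least $k-1$, so Lemma~\ref{lem:padding} gives $B_{a^t}\le L_0P^tH$ for each block. Because all matrices are nonnegative, these entrywise inequalities can be chained through the product, giving
\[
B_w\le L_0P^p\bigl(T_vP^{p+i_1}T_v\cdots P^{p+i_n}T_v\bigr)P^pH=0 .
\]
By Lemma~\ref{lem:unambiguous}, $w$ is uncompletable.

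Third, count the length of $w$. It consists of $n+1$ copies of $v$, $n$ interior powers of $a$ each of length at most $k-1+r-1$, and two end powers of length $k-1$. The total is at most $(n+1)|v|+(n+2)(k-1)+n(r-1)$, which is~\eqref{eq:quantitative-length}.

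The only point needing care is that the corollary allows an arbitrary alphabet and an arbitrary word $v$. Neither incompleteness nor Proposition~\ref{prop:deficient} is needed, since the hypothesis $s^{n+1}<r^n$ replaces the deficiency argument. Lemmas~\ref{lem:padding} and~\ref{lem:semigroup} require only that $C$ is a code containing $a^r$, with lengths between $1$ and $k$. The case $k=1$ is also harmless: then $p=0$, there is a single vertex, $r=1$, and the padding inequality holds trivially.
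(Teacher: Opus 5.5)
Your proposal is correct and follows the paper's own proof. Like the paper, you apply Theorem~\ref{thm:contraction} with $M=T=T_v$ and $R_i=P^{k-1+i}$, $0\le i<r$, to obtain a zero product with $n+1$ copies of $T_v$, then reuse the padded word~\eqref{eq:output-word} from Subsection~\ref{sec:quadratic}; you additionally spell out the entrywise chaining, the $n=0$ case and the length count, which the paper leaves implicit.
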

\begin{proof}
The matrices $R_i=P^{k-1+i}$, $0\le i<r$, sum to $J_r$, so
Theorem~\ref{thm:contraction} gives a zero product with $n+1$ copies of $T_v$.
The construction in Subsection~\ref{sec:quadratic} then gives an
uncompletable word containing $n+1$ copies of $v$ and powers of $a$ with
total exponent at most $(n+2)(k-1)+n(r-1)$.
\end{proof}
Thus~\eqref{eq:power-criterion} itself implies incompleteness. The corollary
allows arbitrary $|v|$; a finite alphabet is needed only for the short-deficiency
argument in Subsection~\ref{sec:deficiency}.

\paragraph{Number of repetitions.}
For $s<r$, integer descent gives at most $s$ rotations. The power criterion
can give either a smaller or a larger sufficient number:
\begin{center}
\begin{tabular}{rrrr}
\toprule
Cycle size $r$ & Mass $s$ & Integer descent & Least $n$ in~\eqref{eq:power-criterion}\\
\midrule
16&8&8&4\\
16&15&15&42\\
\bottomrule
\end{tabular}
\end{center}
In the first row, $8^5<16^4$, whereas $8^4=16^3$. In the second,
$15^{42}\ge16^{41}$ and $15^{43}<16^{42}$. Since $s/r<1$, these comparisons
determine the first successful $n$.
Strictness is necessary: $T=R_1=[1]$ satisfies the averaging condition,
but every product is $[1]$ and $s^{n+1}=h^n=1$. If $s=0$, already $n=0$ suffices.

\section{Kraft equality and completeness}\label{app:kraft}
For completeness, we give a finite proof of the classical implication used
in Proposition~\ref{prop:deficient}.

\begin{proposition}\label{prop:kraft-complete}
A finite nonempty code over an alphabet of size $d\ge1$ is complete if
$\sum_{c\in C}d^{-|c|}=1$.
\end{proposition}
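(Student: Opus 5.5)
Suppose, for contradiction, that $C$ is incomplete with $\sum_{c\in C}d^{-|c|}=1$, and let $w$ be an uncompletable word. The idea is a counting argument. Unique decipherability together with Kraft equality forces $C^*$ to have as many words as possible at every length. An uncompletable $w$ forces $\Fac(C^*)$, and hence $C^*$, to miss an exponentially growing fraction of words. These two facts are incompatible.

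\textbf{Step 1 (lower growth).} Let $u_n=|C^*\cap\Sigma^n|$. By unique decipherability, each word of $C^*$ has exactly one factorization, so $u_0=1$ and $u_n=\sum_{c\in C}u_{n-|c|}$ for $n\ge1$, with $u_j=0$ for $j<0$. Set $g_n=u_n d^{-n}$. Then $g_n=\sum_{c}d^{-|c|}g_{n-|c|}$, which is a weighted average of $g_{n-1},\dots,g_{n-k}$, since by Kraft equality the weights sum to $1$. An induction gives
\[
g_n\ge\min\{g_0,\dots,g_{k-1}\}=:\gamma.
\]
We also have $\gamma>0$: every $n\ge0$ admits some word of $C^*$ of length $n$. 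I expect this positivity to be the main obstacle, since it fails unless the gcd of the codeword lengths is $1$. So I would instead take $\gamma$ to be the minimum of $g_j$ over the residues $j$ that are multiples of $e=\gcd\{|c|\}$, and work only along $n\in e\N$. The weighted average recurrence stays within multiples of $e$, and the relevant initial values $g_0,g_e,\dots$ up to $k$ are all positive, because such lengths are reachable by semigroup generation up to a bounded threshold. More simply, I would take $\gamma=\min\{g_n: n\in e\N,\ n< N_0\}$ with $N_0$ beyond the Frobenius threshold, where every multiple of $e$ is a sum of codeword lengths. In that range all these values are positive, and the averaging induction holds for $n\ge N_0$ because $N_0\ge k$.

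\textbf{Step 2 (upper growth).} Let $\ell=|w|$ and pad $w$ if needed. Every word of $C^*$ avoids $w$ as a factor. Cut a length-$n$ word into $\lfloor n/\ell\rfloor$ disjoint blocks of length $\ell$; none of these blocks may equal $w$. This gives
\[
u_n\le d^{n}(1-d^{-\ell})^{\lfloor n/\ell\rfloor},
\]
so $g_n\to0$. This contradicts $g_n\ge\gamma>0$ along multiples of $e$, and completes the argument.

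\textbf{Unary case.} When $d=1$, Kraft equality forces $C=\{a\}$, because a code over one letter has at most one word. Then $C^*=a^*$ is complete directly. The general argument also covers this case formally, since $\Sigma^n$ is a single word and $1-d^{-\ell}=0$.
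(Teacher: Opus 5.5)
Your argument is correct in substance, but it takes a different route from the paper. The paper stays inside the flower automaton. The vector $h(q)=d^{k-\ell(q)}$ satisfies $Ah=dh$, with Kraft equality used only at $o$. Then $f(w)=(B_wh)(o)$ is integer-valued, is bounded because every $B_w$ is binary, and satisfies $\sum_b f(wb)=d\,f(w)$. At a maximizing word every extension keeps the maximum, which is incompatible with an uncompletable $u$ having $B_u=0$. You instead count $u_n=|C^*\cap\Sigma^n|$. Unique decipherability gives the renewal recurrence, Kraft equality makes it a weighted average, and a minimum principle over lengths bounds the density $u_nd^{-n}$ from below, while the avoided factor $w$ drives that density to $0$. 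Your version needs no automaton and gives quantitative information about the density of $C^*$, but it pays with gcd and Frobenius bookkeeping and an asymptotic comparison. The paper's version is a single finite maximum argument that reuses the binary path matrices of Lemma~\ref{lem:unambiguous}, handles $d=1$ uniformly, and is the form that was formalized.

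Two repairs are needed. First, every version of $\gamma$ you write down can vanish. For the binary prefix code $\{aa,ab,ba,bba,bbb\}$ (Kraft sum $1$, $e=1$, $k=3$) you have $u_1=0$, so $\min\{g_n:n\in e\N,\ n<N_0\}=0$. Your claim that the values ``in that range'' are positive holds only above the Frobenius threshold. Take instead $\gamma=\min\{g_n:n\in e\N,\ N_0\le n<N_0+k\}$ with $N_0$ past that threshold, and run the averaging induction for $n\ge N_0+k$; there every $n-|c|$ is a multiple of $e$ in $[N_0,n)$. Alternatively, you can skip positivity altogether. Unique decipherability gives $\sum_{n\ge0}u_nd^{-n}=\sum_{m\ge0}\bigl(\sum_{c\in C}d^{-|c|}\bigr)^m=\infty$, whereas your block bound gives $\sum_{n}u_nd^{-n}\le\ell d^{\ell}$. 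Second, for $d=1$ Kraft equality only says $|C|=1$, so $C=\{a^r\}$ with $r$ arbitrary, not necessarily $\{a\}$. It is still complete, since $a^m$ is a factor of $(a^r)^m$.
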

\begin{proof}
Put $k=\max_{c\in C}|c|$. In the flower automaton, let $\ell(q)$ be the remaining petal length from
$q$ to $o$, with $\ell(o)=0$, and put $h(q)=d^{k-\ell(q)}$.
At a nonroot vertex the sole outgoing edge shortens this remaining length
by one, so $(Ah)(q)=d\,h(q)$. At the root, Kraft equality gives
\[
(Ah)(o)=\sum_{c\in C}d^{k-|c|+1}=d^{k+1}=d\,h(o).
\]
Thus $Ah=dh$. Define $f(w)=(B_wh)(o)$. Since $B_w$ is binary,
\[
0\le f(w)\le\sum_q h(q),\qquad f(\eps)=d^k>0,\qquad
\sum_{b\in\Sigma}f(wb)=d\,f(w).
\]
Since $f$ is bounded and integer-valued, it attains a maximum $M>0$, say
at $w$. The values $f(wb)$ are all at most $M$ and have average $M$, so
each equals $M$. Repeating this argument gives $f(wu)=M$ for every
continuation $u$. But an uncompletable $u$ would satisfy $B_u=0$ and hence
$f(wu)=(B_wB_uh)(o)=0$, a contradiction.
\end{proof}

\Needspace{18\baselineskip}
\section{Formal verification}\label{app:formal}
The proofs of Theorems~\ref{thm:quadratic}, \ref{thm:mortality},
and~\ref{thm:algorithm}, and the lower bounds in Section~\ref{sec:lower},
are formalized in Lean~4~\cite{lean4} using mathlib~\cite{mathlib}.
Words are represented as lists, and completeness is defined by the factor
relation used in this paper. The development includes the path-deletion
bijection and the proof of Kraft completeness in Appendix~\ref{app:kraft};
neither is assumed as an external result.

For the algorithm, correctness and a work bound are proved for the same
executable function. Its input is a nonempty list of distinct nonempty
codewords over an explicitly given finite alphabet, subject to unique
decipherability. It computes $k$ and either reports completeness or returns
$w\notin\Fac(C^*)$ with $B_w=0$ and $|w|\le4k^2-3k$. Its work is at most
\begin{equation}\label{eq:formal-cost}
4174(L+d+1)^{14}
\end{equation}
in a binary-arithmetic register/array model that charges for operand and
index lengths, including preprocessing and output. The implementation uses
a suffix quotient whose correspondence with the flower automaton is proved.
The bound concerns this model; compiler correctness and wall-clock time
are outside the formalization.

For the matrix theorem, Lean constructs the component decomposition and
first-return codes from the graph hypotheses. The spectral assumption is
expressed by the growth condition: for every $\lambda>1$ there is $K\ge0$
such that $(M_w)_{pq}\le K\lambda^{|w|}$ for all $w,p,q$. Its equivalence
with the norm-limit definition of joint spectral radius is not formalized.
The finite-monoid case and the lower-bound family for all $k\ge2$ are
proved directly.

The accompanying ancillary archive \code{lean-proof.zip} contains the
sources in \code{proof/}, a statement-to-declaration map, and verification
instructions. It uses Lean~4.19.0 and pinned dependencies. From the archive
root, running
\texttt{bash proof/verify.sh} checks the proofs
and audits their dependencies; only the standard axioms \code{propext},
\code{Classical.choice}, and \code{Quot.sound} occur. There are no admitted
proofs or additional axioms.

\paragraph{AI assistance.}
Codex assisted with the formal development and exposition. Lean checks
the resulting proof terms.

\section{Implementation pilot}\label{app:pilot}
Table~\ref{tab:pilot} compares the compiled Lean construction with breadth-first
search and our Python port of Kiefer and Mascle's general construction
for strongly connected unambiguous automata~\cite[Lemmas 11--17]{kiefer-mascle}.
Our Python reference uses exact rational arithmetic and a Kraft test; it omits
the authors' flower-specific improvements~\cite{kiefer-mascle}. The pilot covers
$59$ distinct dictionaries, with $1\le k\le9$ and $1\le L\le378$. Runs used
an Apple M4 Max, one warmup, five timed runs per method, and a $10$-second
process limit. Inputs passed a unique-decipherability check; we checked every
returned word on the literal flower automaton and completeness by exact Kraft
equality.

% BEGIN PILOT TABLE
\begin{table}[ht]
\centering\small
\begin{tabular}{llrrr}
\toprule
Branch (paired inputs) & Method & Time (ms) & Peak RSS (MiB) & $|w|/U(C)$\\
\midrule
Complete (6/6) & Lean & 0.701 & 49.6 & --\\
 & KM & 0.00523 & 15.2 & --\\
 & Exact BFS & 0.0168 & 15.2 & --\\
\midrule
Matrix (42/47) & Lean & 282 & 49.6 & 2.5\\
 & KM & 0.505 & 15.3 & 2.82\\
 & Exact BFS & 0.0633 & 15.2 & 1\\
\midrule
Pure-letter shortcut (6/6) & Lean & 0.025 & 49.5 & 1.5\\
 & KM & 0.369 & 15.2 & 2.02\\
 & Exact BFS & 0.0778 & 15.2 & 1\\
\bottomrule
\end{tabular}
\caption{Measured construction time, process memory, and returned length.
Each branch uses the same inputs for all three methods: one copy of each
distinct dictionary on which every method completed all five runs.
The paired-input column shows included/available counts.
Time and memory are medians of per-input medians; length ratios are
medians relative to the exact BFS optimum. Complete inputs have no
length ratio. Five timed-out matrix inputs are excluded from the
numeric summaries and retained in the available count.}
\label{tab:pilot}
\end{table}
% END PILOT TABLE

Lean completed all five runs on $54$ of the $59$ dictionaries; five
matrix-branch inputs timed out on every repetition, while both references
completed every case. On the shared matrix-branch sample, Lean's median
construction time exceeded both references; its median returned-length ratio
was $2.5$ relative to exact breadth-first search (Table~\ref{tab:pilot}). These implementation-level
results establish no asymptotic or application-performance comparison. The
ancillary archive \code{absent-word-pilot.zip} contains the pilot inputs and
reproduction instructions.


\begin{thebibliography}{99}
\bibitem{kiefer-mascle}
Stefan Kiefer and Corto N. Mascle,
\emph{On nonnegative integer matrices and short killing words},
SIAM Journal on Discrete Mathematics \textbf{35} (2021), no.~2, 1252--1267.
\href{https://doi.org/10.1137/19M1250893}{doi:10.1137/19M1250893}.
\bibitem{kiefer-ryzhikov}
Stefan Kiefer and Andrew Ryzhikov,
\emph{Spectral and combinatorial methods for efficiently computing the rank
of unambiguous finite automata}, 2026 revision.
\href{https://arxiv.org/abs/2511.09703v3}{arXiv:2511.09703v3}.
\bibitem{mika-szykula}
Maksymilian Mika and Marek Szyku\l{}a,
\emph{The Frobenius and factor universality problems of the Kleene star of a
finite set of words}, Journal of the ACM \textbf{68} (2021), no.~3, 18:1--18:22.
\href{https://doi.org/10.1145/3447237}{doi:10.1145/3447237}.
\bibitem{neraud-selmi}
Jean N\'eraud and Carla Selmi,
\emph{On codes with a finite deciphering delay: constructing uncompletable
words}, Theoretical Computer Science \textbf{255} (2001), nos.~1--2, 151--162.
\href{https://doi.org/10.1016/S0304-3975(99)00160-7}{doi:10.1016/S0304-3975(99)00160-7}.
\bibitem{steinberg}
Benjamin Steinberg, \emph{The averaging trick and the \v{C}ern\'y conjecture},
International Journal of Foundations of Computer Science \textbf{22} (2011),
no.~7, 1697--1706.
\href{https://doi.org/10.1142/S0129054111008970}{doi:10.1142/\allowbreak S0129054111008970}.
\bibitem{neraud}
Jean N\'eraud, \emph{Complete variable-length codes: an excursion into word
edit operations}, EasyChair Preprint 2195 (2019), Theorem~1.
\url{https://easychair.org/publications/preprint/dQFK}.
\bibitem{codes-automata}
Jean Berstel, Dominique Perrin, and Christophe Reutenauer,
\emph{Codes and Automata}, Encyclopedia of Mathematics and its Applications,
vol.~129, Cambridge University Press, 2009.
\bibitem{pribavkina}
Elena V. Pribavkina, \emph{Slowly synchronizing automata with zero and
incomplete sets}, 2009.
\href{https://arxiv.org/abs/0907.4576}{arXiv:0907.4576}.
\bibitem{lean4}
Leonardo de Moura and Sebastian Ullrich,
\emph{The Lean 4 theorem prover and programming language},
Automated Deduction---CADE 28, Springer, 2021, 625--635.
\url{https://doi.org/10.1007/978-3-030-79876-5_37}.
\bibitem{mathlib}
The mathlib Community, \emph{The Lean mathematical library}, 2019.
\href{https://arxiv.org/abs/1910.09336}{arXiv:1910.09336}.
\end{thebibliography}
\end{document}